\documentclass[twocolumn,twoside]{IEEEtran}

\ifCLASSINFOpdf

\else

\fi


\hyphenation{op-tical net-works semi-conduc-tor}

\ifCLASSOPTIONcompsoc
\usepackage[caption=false, font=normalsize, labelfont=sf, textfont=sf]{subfig}
\else
\usepackage[caption=false, font=normalsize]{subfig}
\fi
\usepackage{lipsum}%
\usepackage[dvipsnames]{xcolor}

\usepackage{balance}
\usepackage{multicol}   
\usepackage{cite}
\usepackage{gensymb}
\usepackage{multirow}
\usepackage{graphics}
\usepackage{epsfig}
\usepackage{graphicx}
\usepackage{epstopdf}
\usepackage{textcomp}
\usepackage{amsmath}
\usepackage{mathtools}
\interdisplaylinepenalty=2500
\usepackage{filecontents}
\usepackage{lipsum,color}
\usepackage{amssymb}
\usepackage{float}

\usepackage{amsthm}

%

\usepackage{times} 
\usepackage{amsthm}  

\usepackage{amsfonts}

\newtheorem{proposition}{Proposition}

\theoremstyle{break}

\begin{document}
\title{A Unified Framework for UAV-Based Free-Space Quantum Links: Beam Shaping and Adaptive Field-of-View Control}

\author{Mohammad~Taghi~Dabiri, ~Mazen~Hasna,~{\it Senior Member,~IEEE}, \\ ~Saif~Al-Kuwari,~{\it Senior Member,~IEEE}, ~and~Khalid~Qaraqe,~{\it Senior Member,~IEEE}
	
	\thanks{Mohammad Taghi Dabiri and Mazen Hasna are with the Department of Electrical Engineering, Qatar University, Doha, Qatar (e-mail: m.dabiri@qu.edu.qa; hasna@qu.edu.qa).}
	
	\thanks{Saif Al-Kuwari is with the Qatar Center for Quantum Computing, College of Science and Engineering, Hamad Bin Khalifa University, Doha, Qatar. email: (smalkuwari@hbku.edu.qa).}
	
	\thanks{Khalid A. Qaraqe is with the College of Science and Engineering, Hamad Bin Khalifa University, Doha, Qatar, and also with the Department of Electrical Engineering, Texas A\&M University at Qatar, Doha, Qatar (e-mail: kqaraqe@hbku.edu.qa).}
	
	\thanks{This publication was made possible by NPRP14C-0909-210008 from the Qatar Research, Development and Innovation (QRDI) Fund (a member of Qatar Foundation), Texas A\&M University at Qatar, and Hamad Bin Khalifa University, which supported this publication. } 
	
}

\maketitle
\begin{abstract}
This paper develops a comprehensive analytical framework for modeling and performance evaluation of unmanned aerial vehicles (UAVs)-to-ground quantum communication links, incorporating key physical impairments such as beam divergence, pointing errors at both transmitter and receiver, atmospheric attenuation, turbulence-induced fading, narrow field-of-view (FoV) filtering, and background photon noise. To overcome the limitations of conventional wide-beam assumptions, we introduce a grid-based approximation for photon capture probability that remains accurate under tightly focused beams. Analytical expressions are derived for the quantum key generation rate and quantum bit error rate (QBER), enabling fast and reliable system-level evaluation. Our results reveal that secure quantum key distribution (QKD) over UAV-based free-space optical (FSO) links requires beam waists below 10\,cm and sub-milliradian tracking precision to achieve Mbps-level key rates and QBER below $10^{-3}$. Additionally, we highlight the critical role of receiver FoV in balancing background noise rejection and misalignment tolerance, and propose adaptive FoV tuning strategies under varying illumination and alignment conditions. The proposed framework provides a tractable and accurate tool for the design, optimization, and deployment of next-generation airborne quantum communication systems.
\end{abstract}

\begin{IEEEkeywords}
	Quantum key distribution (QKD), Quantum Channel Modeling, field of view (FoV).
\end{IEEEkeywords}

\IEEEpeerreviewmaketitle


\section{Introduction}

The rapid growth of digital communication networks has made secure data exchange a cornerstone of modern society. However, the emergence of quantum computing threatens the foundations of classical cryptographic schemes such as RSA and ECC, which rely on computational hardness. In particular, the famous Shor algorithm can efficiently break these systems when running on a fault-tolerant quantum computer, making today's security systems vulnerable in a post-quantum era \cite{sood2024cryptography,tom2023quantum}.
To counteract this threat, quantum key distribution (QKD) has been proposed as a physically secure alternative, leveraging the laws of quantum mechanics to establish provably secure keys \cite{nadlinger2022experimental}. Unlike classical encryption, QKD ensures that any eavesdropping attempt introduces detectable disturbances, thus enabling unconditional security at the physical layer.

While fiber-based QKD has demonstrated high security over controlled links, its distance limitation and high deployment cost restrict its scalability, particularly in dynamic or large-scale scenarios \cite{kong2024secret}. Free-space QKD (FS-QKD), on the other hand, offers infrastructure-free deployment and greater flexibility \cite{kong2022unmanned}. However, implementing FS-QKD remains challenging due to its reliance on precise line-of-sight (LoS) alignment and sensitivity to atmospheric and geometric impairments. Among airborne platforms, unmanned aerial vehicles (UAVs) have gained significant interest for QKD applications, offering mobility, rapid deployment, and favorable elevation angles for establishing LoS quantum links \cite{quintana2019low,trinh2024quantum}.

Despite their potential, UAV-based QKD systems face stringent challenges, including beam misalignment, atmospheric turbulence, and background photon noise. Unlike classical free-space optical (FSO) links, quantum systems rely on single-photon transmissions, which inherently limit the key generation rate, primarily due to geometric losses. Only a small portion of photons reach the receiver aperture, making sub-milliradian pointing accuracy and ultra-narrow field-of-view (FoV) essential to minimize alignment losses. These constraints are critical to achieving high quantum key rates while keeping the quantum bit error rate (QBER) low \cite{trinh2025towards}.
To address these challenges, we develop an accurate and tractable model for UAV-to-ground quantum channels that captures key physical impairments and quantifies their joint impact on QKD performance.

\subsection{Literature Review}
While extensive research has been conducted on fiber-based QKD, the modeling and performance analysis of FSO quantum communication systems—particularly over long-distance links—remain relatively limited \cite{ralegankar2021quantum}, potentially restricting the practical utility of FSO-based QKD. Unlike optical fibers, which offer stable and well-characterized propagation environments, free-space quantum links introduce significant challenges, such as severe geometric loss, atmospheric turbulence, and background noise, especially in dynamic scenarios involving airborne platforms.

Several recent studies have explored the potential of UAVs in quantum communication, particularly in free-space QKD scenarios. A notable survey is provided in~\cite{kumar2021survey}, where various quantum-enabled UAV applications, including QKD, entanglement distribution, and sensing—are systematically reviewed. \cite{gao2023qkd} demonstrated a lightweight software-based QKD for UAV-to-ground links with practical feasibility. Kondamuri et al.\ presented analytical error and capacity models for coherent-state quantum modulation under weak turbulence \cite{kondamuri2024quantum}, while Waghmare et al.\ revealed a trade-off between QBER and signal photon count in turbulent free-space channels \cite{waghmare2023performance}.  
However, these studies largely overlook the critical impact of pointing errors and geometrical loss, which are fundamental limitations in practical long-distance free-space quantum communication systems.


Building on these challenges, Nguyen et al.\ proposed a blind reconciliation scheme using protograph LDPC codes for FSO-based satellite QKD, with pointing error modeled via a plane-wave approximation \cite{nguyen2024blind}. Trinh et al.\ introduced rooftop-mounted optical reconfigurable intelligent surfaces (ORISs) to improve HAP-to-UAV QKD by adaptively mitigating pointing errors and turbulence effects \cite{trinh2025optical}. Vazquez-Castro and Samandarov showed that quantum detection outperforms classical methods for binary modulations in space channels across orbital regimes \cite{vazquez2023quantum}, while Chakraborty et al.\ developed a hybrid noise model for more accurate secret key rate (SKR) estimation in satellite-based QKD \cite{chakraborty2025hybrid}. Despite these advances, pointing error remains the dominant limitation in long-distance quantum links, where even slight misalignments can reduce secret key rates to just a few bits per second—or even lower—due to the single-photon nature of quantum signals \cite{dabiri2025impact}.

Although the aforementioned studies focus on long-distance satellite-based links, this class of work typically targets short-range atmospheric channels in the kilometer scale, such as UAV-based scenarios. For instance, \cite{alshaer2021reliability,alshaer2022performance} analyze the impact of pointing errors and UAV dynamics in entanglement- and QKD-based FSO systems, proposing optimized configurations under realistic turbulence conditions. Kong \cite{kong2024uav} proposed using UAVs as mobile couriers to deliver QKD-generated keys to ground devices, addressing a key deficiency in resource-limited nodes. Xue et al.\ \cite{xue2021airborne} reviewed airborne QKD progress, emphasizing its flexibility for extending terrestrial and satellite networks. Al-Badarneh et al.\ \cite{al2025unified} developed a unified MGF-based framework to evaluate UAV-based quantum links under turbulence and pointing errors using Helstrom detection.

All existing channel models in both satellite and short-range UAV-based quantum communication studies rely on the pointing error formulation presented in \cite{farid2007outage}, originally developed for conventional FSO links. This model assumes a wide-beam regime where the beamwidth at the receiver $w_z$ is much larger than the aperture radius $r_a$. However, while this condition is valid for long-distance quantum satellite links, its applicability to short-range terrestrial or UAV-based quantum systems remains uncertain.

This raises an important question: to what extent are classical FSO models valid for quantum communication links? This question arises from the fundamental differences between classical and quantum systems.
In classical systems, the channel transmissivity can be as low as $10^{-4}$ or even less, yet the high photon flux ensures that enough power is received. The performance metric is typically outage probability, and hence the beam width is optimized to be large (on the order of meters) to tolerate pointing errors and maintain a reliable link \cite{dabiri2019tractable,dabiri2019optimal}.
In contrast, quantum systems operate with single-photon-level transmissions. A transmissivity of $10^{-4}$ implies that only one photon is received every $10^4$ time slots. Moreover, quantum information is often carried by entangled photons, and only reveals its content upon measurement—suitable for specialized applications like QKD or quantum sensing. After detection, key reconciliation over a public channel is required, further shifting the focus from outage probability to SKR.

Therefore, unlike classical FSO, the dominant metric in quantum systems is the average channel transmissivity, which must be increased—ideally to the order of $10^{-2}$—to achieve usable SKRs. This can only be accomplished by using tighter tracking systems and narrower beam widths, typically around 10 cm. However, under such narrow-beam conditions, the wide-beam approximation in \cite{farid2007outage} becomes invalid. To the best of our knowledge, none of the existing analytical models has addressed this inconsistency.
Furthermore, achieving high SKR and maintaining the required QBER threshold demands a holistic system-level design that jointly optimizes parameters such as photon generation rate, background noise, and receiver field-of-view (FoV) under varying ambient and weather conditions. This coupled design problem remains an open gap in the literature—one that this paper aims to address.

\subsection{Contributions}

This paper addresses a critical modeling gap in UAV-based free-space quantum communication systems, where the widely adopted geometric loss model fails under the narrow-beam conditions essential for high SKR performance. Recognizing the fundamental differences between classical and quantum FSO links—particularly in terms of channel transmissivity requirements and system design priorities—we propose a new analysis framework that revises the pointing error model beyond the conventional far-field assumption. Furthermore, we perform a joint system-level optimization over key design parameters, including beam width, FoV, photon emission rate, and background noise, to identify operating regimes that maximize SKR while satisfying QBER constraints under realistic turbulence and ambient light conditions.

\begin{itemize}
	\item \textbf{Modeling validity analysis}: We rigorously examine the validity limits of the conventional pointing error model~\cite{farid2007outage} under narrow-beam conditions (e.g., $w_z \approx 10$~cm), which are critical for achieving high SKR in short-range quantum links.
	
	\item \textbf{Refined channel modeling}: We derive an improved pointing error formulation that captures beam divergence and geometric alignment effects without relying on the far-field approximation.
	
	\item \textbf{Performance metrics redefinition}: We shift the focus from outage probability (used in classical FSO) to average channel transmissivity as the key metric for evaluating quantum link performance.
	
	\item \textbf{System-level co-design}: We develop a coupled optimization framework that jointly tunes beam width, FoV, background noise acceptance, and photon emission rate to maximize SKR under QKD-specific constraints.
	
	\item \textbf{Design guidelines}: We provide practical insights for configuring UAV-based QKD links under varying turbulence and background light conditions, including identification of optimal FoV settings to balance alignment tolerance and noise rejection.
\end{itemize}

Although the core contributions of this work are tailored toward the design of UAV-to-ground quantum links, owing to their potential in enabling secure LoS connections between distributed users, the modeling and optimization framework developed herein is broadly applicable to other short-range FS-QKD configurations involving dynamic or stationary platforms.

\begin{figure*}
	\centering
	\subfloat[] {\includegraphics[width=2.2 in]{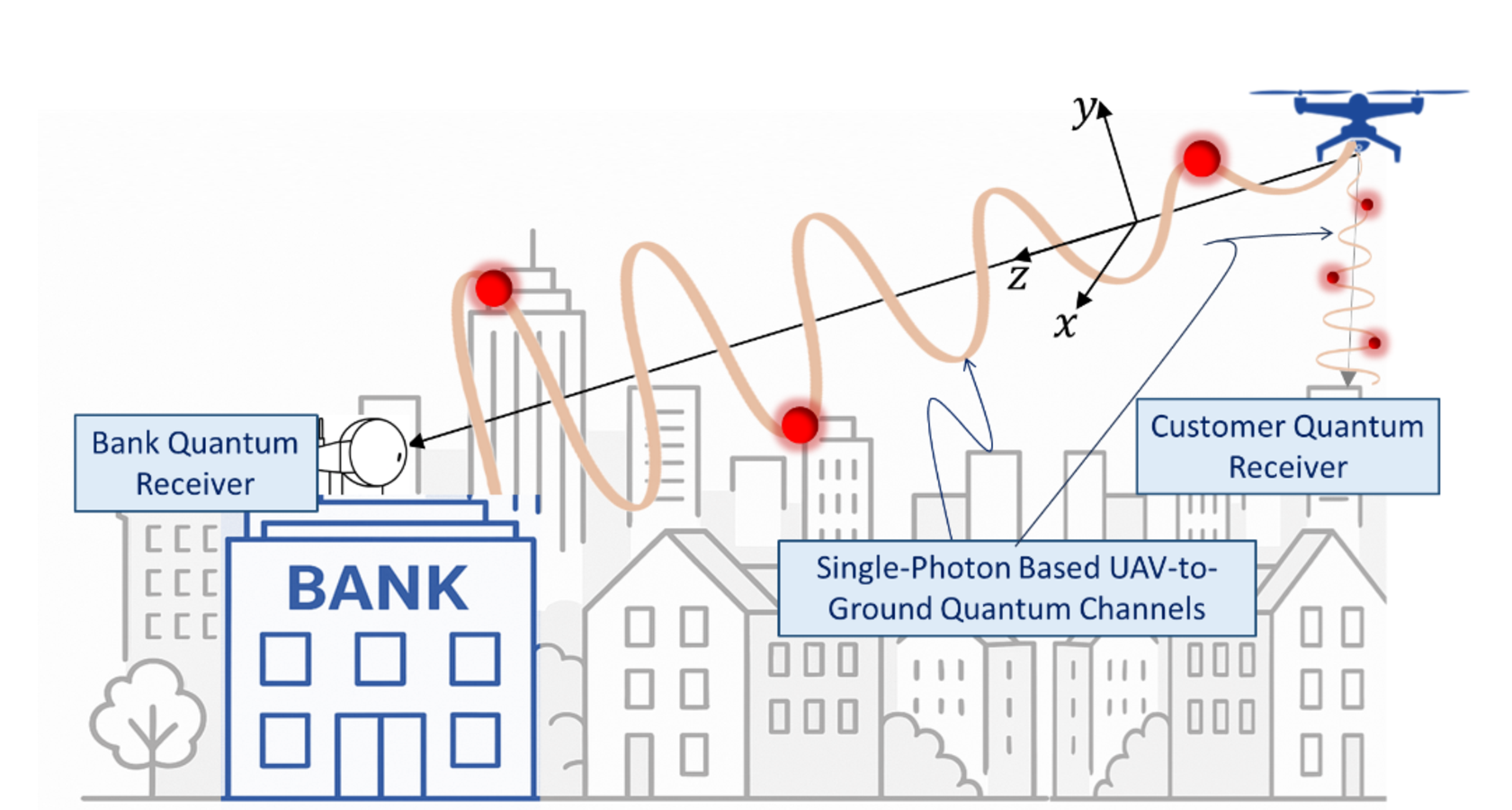}
		\label{ch1}
	}
	\hfill
	\subfloat[] {\includegraphics[width=3 in]{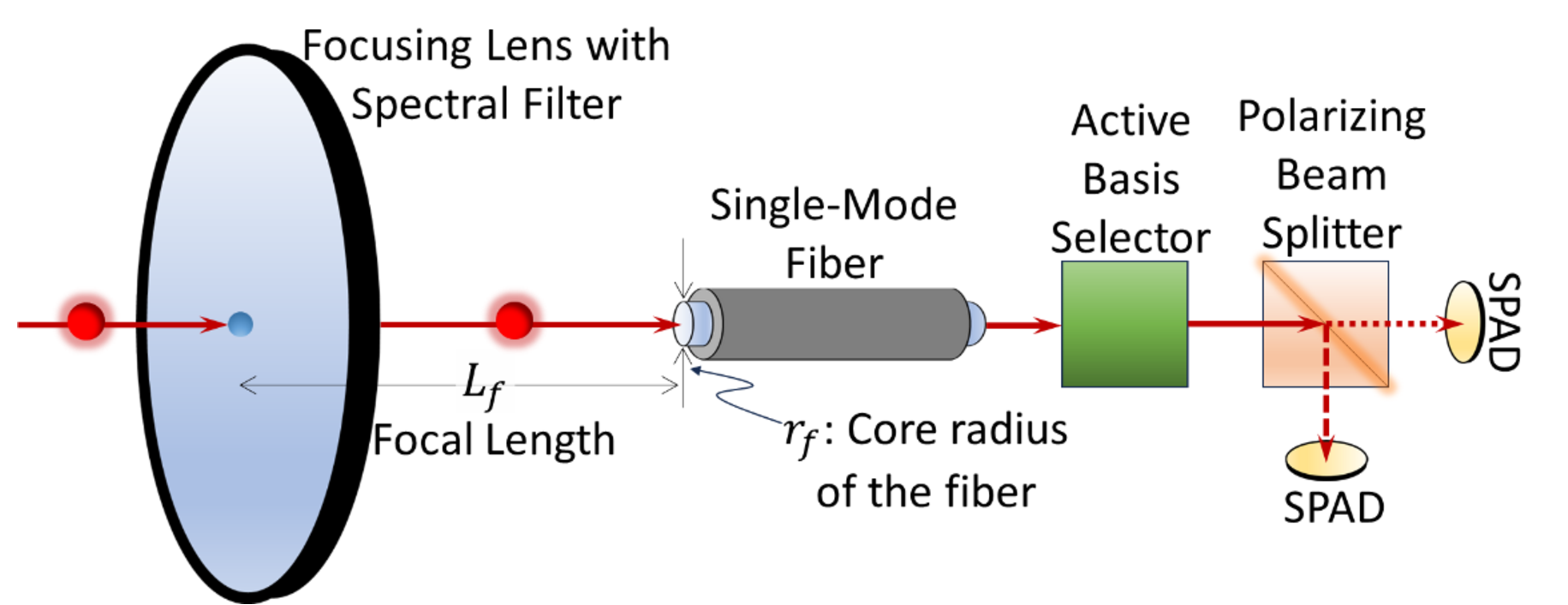}
		\label{ch2}
	}
    \hfill
    \subfloat[] {\includegraphics[width=1.7 in]{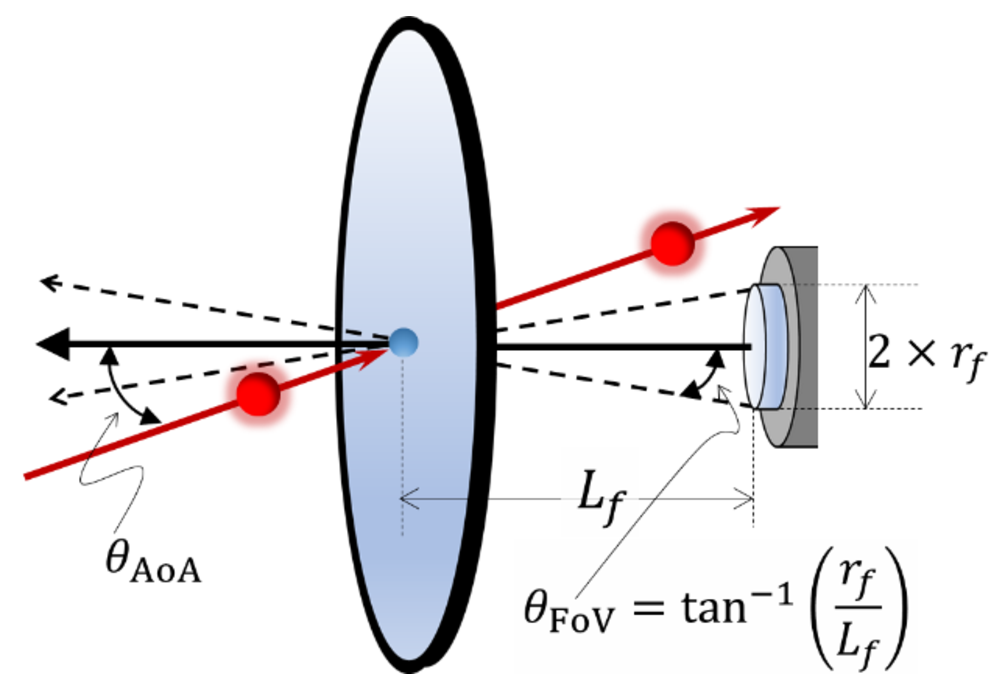}
	\label{ch3}
    }
	\caption{(a) Conceptual illustration of a UAV-to-ground quantum communication scenario, where a UAV acts as a quantum transmitter to establish a shared secret key between a bank and a customer. The UAV facilitates line-of-sight optical links with both ground nodes, enabling secure QKD in a mobile and flexible setting. (b) Simplified block diagram of the photon receiver. The incoming photon or quantum bit (qubit) is collected by a converging lens with radius $r_{\text{lens}}$ and focused at its focal length $L_f$ onto the core of a single-mode fiber with radius $r_f$. A spectral filter positioned between the lens and the fiber removes unwanted wavelengths. After fiber coupling, the photon passes through a polarizing beam splitter (PBS) and an active basis selector, which together determine the measurement basis. The photon is then detected by a single-photon avalanche diode (SPAD), enabling the projection of the qubit onto a specific polarization state for quantum information extraction.
    (c) Illustration of the receiver's angular FoV defined by the single-mode fiber core radius $r_f$ and the lens focal length $L_f$. The angular deviation of the incoming beam, denoted by $\boldsymbol{\theta}_{\text{AoA}}$, must remain within $\theta_{\text{FoV}}$ for the qubit to be accepted.}
	
	\label{ch}
\end{figure*}

\begin{table}[t]
	\centering
	\caption{Summary of System Notations}
	\begin{tabular}{ll}
		\hline
		\textbf{Notation} & \textbf{Description} \\
		\hline
		$n_t$ & Number of photons in a transmitted pulse \\
		$\mu_t$ & Average photon number per transmitted pulse \\
		$T_{qs}$ & Quantum slot duration \\
		$R_q$ & Quantum transmission rate ($R_q = 1/T_{qs}$) \\
		$x, y, z$ & Spatial coordinates (z is the propagation axis) \\
		$w_0$ & Initial beam waist at transmitter \\
		$w_z$ & Beam radius at distance $z$ \\
		$\lambda$ & Operating wavelength \\
		$L_z$ & UAV-to-ground distance \\
		$\boldsymbol{\theta}_e$ & Transmitter FSM pointing error (2D) \\
		$\sigma_{\theta_e}^2$ & Variance of transmitter FSM-based tracking error \\
		$r_d$ & Beam lateral displacement at receiver plane \\
		$\sigma_{r_d}^2$ & Variance of beam displacement due to pointing error \\
		$r_a$ & Receiver aperture radius \\
		$\mu_p$ & Photon capture probability within aperture \\
		$N_g$ & Number of grid segments in aperture approximation \\
		$\eta_{\text{atm}}$ & Atmospheric transmittance \\
		$\alpha_a$ & Atmospheric attenuation coefficient \\
		$\eta_{\text{turb}}$ & Turbulence-induced transmittance \\
		$\alpha, \beta$ & Gamma-Gamma turbulence parameters \\
		$\sigma_R^2$ & Rytov variance (index of turbulence strength) \\
		$n_b$ & Number of background photons per time slot \\
		$\mu_b$ & Mean background photon count \\
		$B_{\lambda}$ & Spectral radiance of background light \\
		$A_r$ & Effective receiver aperture area \\
		$\Delta\lambda$ & Spectral filter bandwidth \\
		$\hbar, c$ & Planck’s constant and speed of light \\
		$r_f$ & Radius of single-mode fiber core \\
		$L_f$ & Focal length of the lens \\
		$\theta_{\text{FOV}}$ & Angular field of view \\
		$\Omega_{\text{FOV}}$ & Solid-angle field of view \\
		$\boldsymbol{\theta}_{\text{AoA}}$ & Angle of arrival at receiver \\
		$\sigma_{\text{AoA}}^2$ & Variance of angle of arrival \\
		$\mu_{\text{FOV}}$ & Visibility indicator (0 or 1) \\
		$\mu_d$ & Detector efficiency \\
		$n_q$ & Number of detected photons per quantum time slot \\
		$\mu_q$ & Mean detected photon count per slot \\
		\hline
	\end{tabular}
	\label{tab:notations}
\end{table}

\section{System Model}
We consider a UAV-to-ground quantum communication system in which a mobile UAV acts as the transmitter, while a fixed ground station serves as the receiver. Fig. \ref{ch1} illustrates this scenario. The UAV is equipped with a quantum photon source that emits single-photon pulses encoded in polarization states. These photons propagate through a FSO channel, which is subject to several impairments such as beam wandering, atmospheric turbulence, pointing errors, and absorption. The receiver on the ground includes optical focusing and filtering components, polarization-based separation, and single-photon detection units. The primary goal is to model the quantum channel between the UAV and the ground station, taking into account the dynamic behavior of the UAV and the statistical properties of the atmospheric channel.

\subsection{Transmitter}
In the considered UAV-to-ground quantum communication setup, the UAV is equipped with an onboard transmitter capable of emitting single-photon pulses. These photons are typically generated using a weak coherent pulse (WCP) source, where the number of photons per pulse follows a Poisson distribution. The probability that a pulse contains exactly $k$ photons is given by \cite{foletto2022security}
\begin{align} \label{df1}
	\mathbb{P}(n_t = k) = \frac{e^{-\mu_t} \mu_t^k}{k!},
\end{align}
where $\mu_t$ denotes the average photons per pulse. 
The selection of $\mu_t$ plays a critical role in the overall system performance.\footnote{In many practical scenarios, the instantaneous transmissivity of the quantum channel can fall below $10^{-4}$, implying that, on average, fewer than one photon per $10^4$ quantum time slots may be successfully received when $\mu_t = 1$. This drastically reduces the effective quantum bit rate. An intuitive approach to mitigate this limitation is to increase $\mu_t$; however, excessively high values of $\mu_t$ compromise the security of quantum protocols by increasing the probability of multi-photon emissions. This trade-off between photon rate and security will be further analyzed in the following sections.}
The quantum source operates in discrete time slots of duration $T_{qs}$, referred to as the quantum slot duration. This parameter is inversely related to the photon transmission rate $R_q$, such that $T_{qs} = 1/R_q$.

To describe the spatial configuration, we define a coordinate system in which the direction of photon propagation from the UAV to the ground station lies along the $z$-axis. The apertures of both the transmitter and the receiver are located in the $(x-y)$-plane, which is orthogonal to the transmission axis. 
Photons emitted from the UAV transmitter are assumed to propagate in a fundamental Gaussian mode, characterized by an initial beam waist radius $w_0$, which governs the beam's divergence during free-space propagation. This mode is a common approximation for sources operating in the $\text{TEM}_{00}$ configuration \cite{pampaloni2004gaussian}. 
Unlike classical optics, where intensity represents power distribution, in quantum communication, the spatial beam profile defines the probability density of a photon's presence. Specifically, the normalized probability of detecting a single photon at a transverse position $(x, y)$ and distance $z$ is given by
\begin{align} \label{xc1}
	|\psi(x, y, z)|^2 = \frac{2}{\pi w_z^2} \exp\left(-\frac{2(x^2 + y^2)}{w_z^2} \right),
\end{align}
where $w_z$ denotes the beam radius at distance $z$, defined as the point where the probability density falls to $1/e^2$ of its maximum. At the UAV-to-ground separation distance $L_z$, this radius becomes \cite{ghassemlooy2019optical}
\begin{align}
	w_{L_z} = w_0 \sqrt{1 + \left( \frac{\lambda L_z}{\pi w_0^2} \right)^2 },
\end{align}
where $\lambda$ is the operating wavelength. This probabilistic description reflects the intrinsic quantum uncertainty of the photon's transverse position prior to measurement. The photon remains delocalized until detection occurs, and its spatial behavior is governed by the wavefunction's modulus squared.
In contrast, classical FSO systems involve coherent beams with a large number of photons per pulse, enabling the use of deterministic optical intensity. In such systems, the Gaussian beam profile corresponds directly to measurable power distribution, whereas in the quantum regime, it characterizes the likelihood of individual photon detection.

Moreover, due to the motion and vibration of the UAV, as well as environmental disturbances, the alignment between the transmitter and the receiver is imperfect. To mitigate this, we assume the use of a fast-steering mirror (FSM) as a lightweight, responsive tracking system onboard the UAV. However, residual pointing errors persist and are modeled as a two-dimensional random angular deviation $\boldsymbol{\theta}_e = (\theta_{ex}, \theta_{ey})$, with each component following a zero-mean Gaussian distribution \cite{dabiri2025novel}:
\begin{align} \label{xc2}
	\theta_{ex}, \theta_{ey} \sim \mathcal{N}(0, \sigma_{\theta_e}^2).
\end{align}

\subsection{Photon Propagation}
As photons travel from the UAV transmitter to the ground-based receiver, they experience various environmental impairments as a result of atmospheric effects. Among these, potential changes in polarization due to scattering and refraction are often considered. However, such effects are predominantly induced by off-axis scattering events or non-line-of-sight (NLOS) propagation paths \cite{zhang2012characteristics}. Since the receiver only detects photons that remain on or near the direct line-of-sight trajectory, and since quantum communication systems typically employ narrow beams and tight acquisition mechanisms, the depolarization effects along the direct path can be considered negligible \cite{zhang2012characteristics}. Therefore, we assume that the polarization state of the photons is preserved during propagation through the atmosphere.

Generally, two primary mechanisms contribute to photon attenuation along the channel: atmospheric absorption and scattering, and random fluctuations in the channel transmissivity due to turbulence.
The deterministic attenuation caused by molecular absorption and scattering is modeled using the Beer–Lambert law. The atmospheric transmittance $\eta_{\text{atm}}$ over a propagation distance $L_z$ is given by \cite{ghassemlooy2019optical}
\begin{align} \label{at1}
	\eta_{\text{atm}} = \exp(-\alpha_a L_z),
\end{align}
where $\alpha_a$ is the atmospheric attenuation coefficient, which depends on wavelength, altitude, and visibility conditions.

In addition to deterministic loss, atmospheric turbulence causes random fluctuations in the received signal intensity as a result of refractive index variations along the propagation path. These scintillation effects are well-modeled by the Gamma-Gamma distribution, particularly under moderate-to-strong turbulence. The turbulence-induced transmissivity $\eta_{\text{turb}}$ is modeled as a continuous random variable with the following probability density function (PDF) \cite{andrews2005laser}:
\begin{align} \label{at2}
	f_{\eta_{\text{turb}}}(\eta) = \frac{2(\alpha \beta)^{\frac{\alpha + \beta}{2}}}{\Gamma(\alpha)\Gamma(\beta)} \eta^{\frac{\alpha + \beta}{2} - 1} K_{\alpha - \beta} \left(2\sqrt{\alpha \beta \eta} \right),
\end{align}
where $\alpha$ and $\beta$ represent the small- and large-scale turbulence parameters, respectively, both of which are functions of the Rytov variance $\sigma_R^2$.

\subsection{Receiver}
After traversing the atmospheric channel and experiencing various impairments, such as beam divergence, turbulence-induced fading, and tracking misalignment, a qubit encoded in the polarization state of a single photon has a very small probability of reaching the receiver aperture. As illustrated in Fig.~\ref{ch2}, the quantum receiver consists of a simplified photonic structure that captures and processes the incoming qubit.
The aperture comprises a converging lens with radius $r_{\text{lens}}$ that collects the incoming photon and focuses it onto the end-face of a single-mode fiber (SMF). The fiber core has a radius $r_f$ and is precisely positioned at the focal plane of the lens, whose focal length is denoted by $L_f$. An optical spectral filter is placed immediately after the lens and before the fiber to block background photons outside the operational wavelength band, thereby improving signal-to-noise performance.
After being spectrally filtered and coupled into the fiber, the qubit is directed toward a set of optical and detection components. A polarizing beam splitter (PBS), combined with an active basis selector, defines the measurement basis. The qubit is then projected onto a specific polarization state and detected by a single-photon avalanche diode (SPAD), enabling quantum bit discrimination with high temporal precision.

One of the primary challenges in UAV-to-ground quantum communication is the presence of background photon noise, especially during daylight operations. This issue arises due to two main factors. First, the receiver aperture is oriented skyward, inherently exposing it to higher levels of ambient light from the sun and atmospheric scattering. Second, the quantum channel is highly lossy, meaning that in most quantum time slots, the intended signal photon does not reach the detector. As a result, any ambient photon arriving within the receiver's FOV and matching the operational wavelength of the quantum system, even with an incorrect polarization, can be detected by one of the SPADs, thereby leading to a false quantum key bit detection.
The number of background photons incident during a single quantum time slot follows a Poisson distribution with mean $\mu_b$, given by \cite{kim2022photon}
\begin{align}
	\mathbb{P}(n_b = k) = \frac{e^{-\mu_b} \mu_b^k}{k!},
\end{align}
where $n_b$ denotes the number of background photons in a single time slot, and $\mu_b$ is the average background photon count per slot.
The mean value $\mu_b$ depends on several physical and optical parameters, including the receiver's FOV, the background spectral radiance, and the quantum slot duration $T_{qs}$. 
The mean number of background photons incident during a quantum slot is given by
\begin{align}
	\mu_b = \frac{B_{\lambda} \cdot A_r \cdot \Omega_{\text{FOV}} \cdot \Delta\lambda \cdot T_{qs} \cdot \lambda}{\hbar c},
\end{align}
where $B_{\lambda}$ is the spectral radiance of the background light (in W/m$^2$/sr/nm), $A_r$ is the effective receiver aperture area, $\Omega_{\text{FOV}}$ is the receiver field of view, $\Delta\lambda$ is the spectral filter bandwidth, $T_{qs}$ is the quantum slot duration, $\lambda$ is the operating wavelength, and $\hbar$ and $c$ are Planck's constant and the speed of light, respectively.

The effective receiver FOV plays a key role in determining the number of background photons that can enter the detection path. As shown in Fig.~\ref{ch3}, the angular FOV, denoted by $\theta_{\text{FOV}}$, can be geometrically defined as
\begin{align}
	\theta_{\text{FOV}} = \tan^{-1}\left(\frac{r_f}{L_f}\right),
\end{align}
where $r_f$ is the core radius of the single-mode fiber and $L_f$ is the focal length of the converging lens. Assuming a circular symmetric conical field of view with half-angle $\theta_{\text{FOV}}$, the solid-angle field of view is given by
\begin{align}
	\Omega_{\text{FOV}} = 2\pi \left(1 - \cos(\theta_{\text{FOV}}) \right).
\end{align}
To minimize the impact of background noise, it is critical to keep $\theta_{\text{FOV}}$ significantly smaller than those used in conventional FSO systems. This is particularly important in UAV-based quantum links, where the receiver aperture typically points toward the sky and thus collects more ambient light. Reducing $\theta_{\text{FOV}}$ limits the background photon influx, but also increases the system's sensitivity to angular misalignment.

As depicted in Fig.~\ref{ch3}, the incident angle of the incoming beam with respect to the receiver's optical axis, denoted by $\boldsymbol{\theta}_{\text{AoA}} = (\theta_{\text{AoA},x}, \theta_{\text{AoA},y})$, becomes a critical parameter. This angular deviation is primarily due to imperfections in the stabilization system that maintains the receiver aperture aligned in the $x$–$y$ plane. We model this angular deviation as a bivariate Gaussian distribution to capture the statistical misalignment induced by mechanical stabilization errors \cite{dabiri2018channel}:
\begin{align} \label{aoa1}
	(\theta_{\text{AoA},x}, \theta_{\text{AoA},y}) \sim \mathcal{N}(0, \sigma_{\text{AoA}}^2),
\end{align}
where $\sigma_{\text{AoA}}^2$ characterizes the variance of angular misalignment in each axis.
Unlike the fine angular correction performed by the transmitter-side FSM with tracking error $\boldsymbol{\theta}_e$, the receiver-side stabilization system generally has slower response and coarser resolution. 
To account for the directional filtering introduced by the narrow FOV, we define a binary visibility indicator $\mu_{\text{FOV}}$, which models whether the incoming qubit falls within the receiver's acceptance cone:  
\begin{align} \label{at3}
	\mu_{\text{FoV}} = 
	\begin{cases}
		1, & \text{if } \lVert \boldsymbol{\theta}_{\text{AoA}} \rVert \leq \theta_{\text{FoV}}, \\
		0, & \text{otherwise}.
	\end{cases}
\end{align}

Finally, the coupled photon is directed toward one of the SPAD detectors based on its polarization state, as determined by the active basis selection module.
The probability that a photon incident on the SPAD induces a detectable avalanche event is governed by the detector efficiency, denoted by $\mu_d<1$.
This parameter incorporates the effects of quantum efficiency, internal avalanche triggering probability, and circuit-level losses. For an ideal detector, $\mu_d = 1$, but in practice, values between 0.5 and 0.8 are common, depending on the wavelength and detector technology.
%

\section{Analytical Framework for Channel Modeling and Performance Evaluation}
In this section, we develop a unified analytical model for the UAV-to-ground quantum communication channel by integrating the effects of beam divergence, atmospheric absorption and scattering, turbulence-induced fading, pointing errors, receiver misalignment, and FoV filtering. While each of these phenomena has been characterized individually in the system model, their joint impact must be captured to accurately describe the channel's end-to-end behavior. 

Due to the transmitter-side FSM tracking error modeled in \eqref{xc2}, the outgoing photon beam may deviate from the intended propagation axis. This angular deviation, denoted by $\boldsymbol{\theta}_e = (\theta_{ex}, \theta_{ey})$, translates into a lateral displacement at the receiver plane. Specifically, at a transmission distance of $L_z$, the deviation in position can be expressed as $\boldsymbol{r}_d = (r_{dx}, r_{dy})$, where
\begin{align}
	r_{dx} = \theta_{ex} L_z, \quad r_{dy} = \theta_{ey} L_z.
\end{align}
Since each angular component in \eqref{xc2} follows a zero-mean Gaussian distribution with variance $\sigma_{\theta_e}^2$, the corresponding lateral displacement components $r_{dx} $ and $r_{dy} $ are also independent Gaussian variables with zero mean and variance $\sigma_{r_d}^2 = \sigma_{\theta_e}^2 L_z^2$.  This spatial shift displaces the center of the beam's transverse probability distribution, as modeled in \eqref{xc1}, leading to the misaligned photon profile:
\begin{align} \label{xc4}
	|\psi(x, y, z \mid \boldsymbol{r}_d)|^2 = \frac{2}{\pi w_z^2} \exp\left(-2\frac{(x - r_{dx})^2 + (y - r_{dy})^2}{w_z^2} \right).
\end{align}
Using \eqref{xc4}, the probability that a photon lands within the receiver aperture of radius $r_a$ is denoted by $\mu_p$. This is obtained by integrating the spatial probability density over the circular detection region:
\begin{align} \label{xc6}
	&\mu_p(\boldsymbol{r}_d) = \frac{2}{\pi w_z^2} \nonumber \\
	&\times\iint\limits_{x^2 + y^2 \leq r_a^2} \exp\left(-\frac{2[(x - r_{dx})^2 + (y - r_{dy})^2]}{w_z^2} \right) dx\,dy.
\end{align}

In classical FSO systems, it is common to assume that the beam size is significantly larger than the receiver aperture, i.e., $w_z \gg r_a$. Under this assumption, the variation of the Gaussian beam over the aperture area is negligible, and the integral in \eqref{xc6} simplifies to \cite{farid2007outage,dabiri2024modulating}:
\begin{align} \label{xc7}
	\mu_p(\boldsymbol{r}_d) \approx  \frac{2 r_a^2}{w_z^2} \exp\left(-\frac{2\|\boldsymbol{r}_d\|^2}{w_z^2} \right).
\end{align}
This approximation treats the Gaussian profile as nearly constant across the aperture, effectively pulling it outside the integral. However, as will be discussed in Section \ref{simulation}, this assumption does not hold in quantum communication scenarios. First, unlike FSO systems, where the primary concern is link outage, quantum channels are inherently high-loss due to their single-photon nature; thus, outage probability becomes a secondary metric rather than a defining performance criterion. Second, increasing $w_z$ to satisfy the $w_z \gg r_a$ condition leads to an extremely diffuse beam, which in turn drastically reduces the probability of photon detection at the receiver. Since quantum links transmit only one photon per pulse, large beam divergence can reduce the secret key rate to below one bit per second. Therefore, selecting a smaller $w_z$ is not only acceptable but also desirable for quantum key generation, even if it increases the sensitivity to alignment errors. This trade-off between photon concentration and misalignment robustness is a key distinction between classical and quantum free-space optical links.
Therefore, in scenarios where the assumption $w_z \gg r_a$ no longer holds, the approximation in \eqref{xc7} becomes invalid. In such cases, the full two-dimensional integral in \eqref{xc6} must be evaluated. 
To gain a better understanding of the limitations of the classical wide-beam approximation in \eqref{xc7}, Fig.~\ref{sm1} illustrates the photon capture probability $\mu_p(r_d)$ for two different spot sizes, $w_z = 5$\,cm and $w_z = 10$\,cm. These values are representative of the practical settings used in our simulation section, where it is shown that such beam sizes are more suitable for UAV-based quantum communication scenarios aiming to achieve higher key generation rates. As we demonstrate in Fig. \ref{sm1}, the classical approximation in \eqref{xc7} noticeably deviates from the numerically evaluated results based on the exact integral formulation in \eqref{xc6}, particularly under realistic quantum operating conditions. This highlights the inadequacy of conventional FSO assumptions in high-loss quantum regimes and emphasizes the need for accurate beam-profile modeling.

%
\begin{figure}
	\begin{center}
		\includegraphics[width=3.0 in]{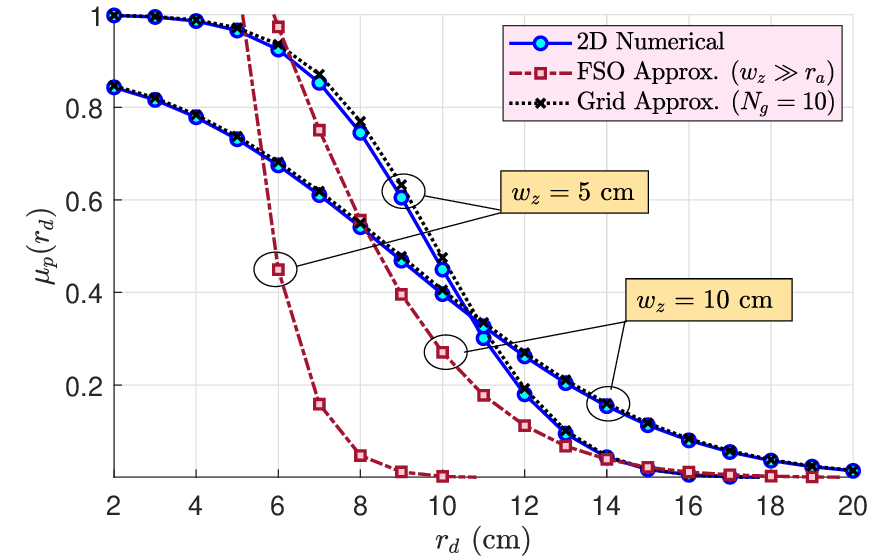}
		\caption{Photon capture probability $\mu_p(r_d)$ versus beam displacement $r_d$ for two different spot sizes: $w_z = 5$\,cm and $w_z = 10$\,cm. The figure compares the classical FSO approximation in \eqref{xc7} and the proposed grid-based model in Proposition~1 with the accurate numerical reference obtained from the two-dimensional integral in \eqref{xc6}.}
		\label{sm1}
	\end{center}
\end{figure}

\begin{proposition}
	The photon capture probability $\mu_p(r_d)$ given by the exact integral formulation in \eqref{xc6} can be efficiently approximated by using a grid-based discretization scheme. Specifically, dividing the receiver aperture interval $[-r_a, r_a]$ into $N_g$ equal-width segments of size $\Delta x = 2r_a / N_g$, the capture probability can be expressed as:
	\begin{align} \label{sb1}
		\mu_p(r_d) \approx \sum_{i=1}^{N_g} \mu_p(r_d, i).
	\end{align}
    where
	\begin{align} \label{sbb2}
		\mu_p(r_d, i) \approx  c_i \exp\left( -\frac{2(x_i - r_d)^2}{w_z^2} \right),
	\end{align}
	 $x_i$ is the center of the $i$-th segment, and
	\begin{align} 
		c_i = \frac{2\, \Delta x}{\sqrt{2\pi} w_z}  \cdot \text{erf}\left( \sqrt{ \frac{2}{w_z^2} } \sqrt{r_a^2 - x_i^2} \right).
	\end{align}
\end{proposition}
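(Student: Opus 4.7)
The plan is to reduce the two-dimensional integral in \eqref{xc6} to a one-dimensional integral whose integrand can be handled by an elementary quadrature. First, I would exploit the rotational symmetry of the circular aperture: since the disk $\{x^2+y^2\le r_a^2\}$ is rotation-invariant and the Gaussian profile depends only on the squared distance to the beam center $\boldsymbol r_d$, the value of $\mu_p(\boldsymbol r_d)$ depends on $\boldsymbol r_d$ solely through $r_d=\|\boldsymbol r_d\|$. Hence, without loss of generality I rotate coordinates so that $r_{dy}=0$ and $r_{dx}=r_d$; this factorizes the exponential and expresses $\mu_p$ as an iterated integral in which the inner integral over $y$ is a pure Gaussian between the chord endpoints $\pm\sqrt{r_a^2-x^2}$.

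Second, I would evaluate the inner integral in closed form. With the substitution $u=\sqrt{2}\,y/w_z$ it reduces to a standard Gaussian between symmetric limits, producing $(w_z\sqrt{\pi}/\sqrt{2})\,\mathrm{erf}\bigl(\sqrt{2(r_a^2-x^2)/w_z^2}\bigr)$. After absorbing the overall prefactor $2/(\pi w_z^2)$, the outer integral becomes
\begin{equation*}
\mu_p(r_d)=\frac{\sqrt{2}}{\sqrt{\pi}\,w_z}\int_{-r_a}^{r_a}\exp\!\left(-\frac{2(x-r_d)^2}{w_z^2}\right)\mathrm{erf}\!\left(\sqrt{\tfrac{2(r_a^2-x^2)}{w_z^2}}\right)dx.
\end{equation*}
Third, I would apply the midpoint (rectangle) rule on $[-r_a,r_a]$ with $N_g$ equal subintervals of width $\Delta x=2r_a/N_g$ centered at $x_i$, replacing the integrand by its midpoint value on each piece. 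Collecting the $x_i$-dependent but $r_d$-independent factors into a single constant
\begin{equation*}
c_i=\frac{\sqrt{2}\,\Delta x}{\sqrt{\pi}\,w_z}\,\mathrm{erf}\!\left(\sqrt{\tfrac{2(r_a^2-x_i^2)}{w_z^2}}\right)=\frac{2\,\Delta x}{\sqrt{2\pi}\,w_z}\,\mathrm{erf}\!\left(\sqrt{\tfrac{2}{w_z^2}}\sqrt{r_a^2-x_i^2}\right)
\end{equation*}
reproduces the stated expression exactly, via the elementary identity $\sqrt{2}/\sqrt{\pi}=2/\sqrt{2\pi}$. The remaining Gaussian factor $\exp\!\bigl(-2(x_i-r_d)^2/w_z^2\bigr)$ is then the per-slab contribution $\mu_p(r_d,i)$ in \eqref{sbb2}, and summation over $i$ yields \eqref{sb1}.

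The main obstacle I anticipate is not algebraic but rather one of justifying the quadrature's accuracy: the erf factor has an unbounded derivative at the endpoints $x=\pm r_a$ because of the $\sqrt{r_a^2-x^2}$ appearing in its argument, which formally degrades the standard $O(N_g^{-2})$ midpoint error near those endpoints. To defend the use of a uniform grid, I would argue that in the parameter regimes of interest (beam waist $w_z$ and aperture radius $r_a$ both on the order of a few centimeters), the Gaussian weight in $x$ is sufficiently peaked to suppress the endpoint contributions, so even a modest $N_g$ yields sub-percent error. The numerical comparison in Fig.~\ref{sm1} would then serve as empirical corroboration of this claim rather than as a formal error bound.
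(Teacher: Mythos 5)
Your proposal is correct and follows essentially the same route as the paper's proof in Appendix~\ref{AppA}: rotational symmetry to align the displacement with the $x$-axis, closed-form evaluation of the inner Gaussian integral over $y$ yielding the error-function factor, and a midpoint-rule discretization of the resulting one-dimensional integral with the $r_d$-independent factors collected into $c_i$. Your additional remark about the unbounded derivative of the erf argument at $x=\pm r_a$ and its suppression by the Gaussian weight is a sound observation that goes slightly beyond the paper, which simply appeals to empirical accuracy for $N_g=10$.
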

\begin{proof}
	Please refer to Appendix~\ref{AppA}.
\end{proof}
This approximation provides a computationally efficient method for evaluating $\mu_p(r_d)$ under realistic beam sizes where the classical approximation fails.
The validity of the grid-based approximation developed in Proposition~1 is further examined in Fig.~\ref{sm1}, which compares the approximate photon capture probability $\mu_p(r_d)$ with numerically evaluated values obtained from the exact two-dimensional integral expression in \eqref{xc6}. The figure shows that even with a coarse discretization using only $N_g = 10$ segments, the grid-based model provides an excellent match to the reference values across a wide range of beam displacements. This confirms that the proposed approximation achieves high accuracy while significantly reducing computational complexity, making it well-suited for real-time performance evaluation in UAV-based quantum communication systems.

\begin{proposition}
	The conditional PDF of $\mu_q$ given $r_d$ models the probability that a single photon emitted by the transmitter is received successfully by the SPAD detector after propagating through the quantum channel. This probability, conditioned on the beam displacement $r_d$, is expressed as:
	\begin{align} \label{eq:mu_q_pdf_rd}
		&f_{\mu_q \mid r_d}(u) = \exp\left( -\frac{\theta_{\text{FOV}}^2}{2\sigma_{\text{AoA}}^2} \right) \delta(u) \nonumber \\
		&\quad + \left[1 - \exp\left( -\frac{\theta_{\text{FOV}}^2}{2\sigma_{\text{AoA}}^2} \right) \right] \cdot \frac{2(\alpha \beta)^{\frac{\alpha + \beta}{2}}}{c_{\text{pt}} \mu_p(r_d)\,\Gamma(\alpha)\Gamma(\beta)} \nonumber \\
		&\quad \times \left( \frac{u}{c_{\text{pt}} \mu_p(r_d)} \right)^{\frac{\alpha + \beta}{2} - 1}
		K_{\alpha - \beta} \left(2\sqrt{ \alpha \beta \cdot \frac{u}{c_{\text{pt}} \mu_p(r_d)} } \right),
	\end{align}
where $\delta(\cdot)$ denotes the Dirac delta function and $\mu_p(r_d)$ is derived in \eqref{sb1}.
\end{proposition}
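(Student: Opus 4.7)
The plan is to recognize $\mu_q$ as a product of independent random factors and apply the law of total probability with respect to the binary FoV indicator. Concretely, given $r_d$, the overall end-to-end single-photon transmissivity factorizes as $\mu_q = \mu_{\text{FoV}} \cdot c_{\text{pt}} \cdot \mu_p(r_d) \cdot \eta_{\text{turb}}$, where $c_{\text{pt}} = \mu_d \, \eta_{\text{atm}}$ lumps together the deterministic detector efficiency and atmospheric attenuation from \eqref{at1}, $\mu_p(r_d)$ is the geometric capture probability from Proposition~1 (deterministic once $r_d$ is fixed), $\eta_{\text{turb}}$ is the Gamma--Gamma turbulence factor from \eqref{at2}, and $\mu_{\text{FoV}} \in \{0,1\}$ is the FoV indicator from \eqref{at3}. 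I would assume, as is standard and as the model of Sections II-B and II-C suggests, that $\mu_{\text{FoV}}$ and $\eta_{\text{turb}}$ are statistically independent of each other (turbulence and receiver AoA jitter arise from distinct physical mechanisms).

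Next, I would evaluate $\mathbb{P}(\mu_{\text{FoV}} = 0 \mid r_d)$. From \eqref{aoa1}, the two Cartesian components of $\boldsymbol{\theta}_{\text{AoA}}$ are independent zero-mean Gaussians with variance $\sigma_{\text{AoA}}^2$, so $\lVert \boldsymbol{\theta}_{\text{AoA}} \rVert$ is Rayleigh-distributed with scale $\sigma_{\text{AoA}}$. Using the Rayleigh tail formula,
\begin{align}
\mathbb{P}\!\left( \lVert \boldsymbol{\theta}_{\text{AoA}} \rVert > \theta_{\text{FOV}} \right) = \exp\!\left( -\frac{\theta_{\text{FOV}}^2}{2\sigma_{\text{AoA}}^2} \right),
\end{align}
which is precisely the weight of the Dirac mass at $u=0$ appearing in \eqref{eq:mu_q_pdf_rd}, since $\mu_{\text{FoV}} = 0$ forces $\mu_q = 0$.

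For the complementary event $\mu_{\text{FoV}} = 1$, which occurs with probability $1 - \exp(-\theta_{\text{FOV}}^2 / 2\sigma_{\text{AoA}}^2)$, I would apply the standard affine change-of-variables rule to the Gamma--Gamma PDF in \eqref{at2}. Writing $\mu_q = a \, \eta_{\text{turb}}$ with constant $a = c_{\text{pt}} \mu_p(r_d)$, the transformed density is $f_{\mu_q \mid r_d, \mu_{\text{FoV}}=1}(u) = a^{-1} f_{\eta_{\text{turb}}}(u/a)$, which after substitution reproduces the second term of \eqref{eq:mu_q_pdf_rd} verbatim. Combining both branches through the law of total probability yields the stated mixture form.

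There is no real analytical obstacle here; the only care required is in the justification that the FoV event and the turbulence factor can be treated as independent conditional on $r_d$, and in verifying that $\mu_p(r_d)$ is deterministic given $r_d$ so that it passes through the change of variables as a scaling constant rather than an additional randomization. Everything else reduces to a Rayleigh tail probability and a linear rescaling of the Gamma--Gamma PDF.
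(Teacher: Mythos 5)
Your proposal is correct and follows essentially the same route as the paper's Appendix~B: factorize $\mu_q = \mu_{pt}\cdot\mu_{\text{FOV}}$, obtain the Dirac mass at $u=0$ from the Rayleigh tail $\exp(-\theta_{\text{FOV}}^2/2\sigma_{\text{AoA}}^2)$ of $\lVert\boldsymbol{\theta}_{\text{AoA}}\rVert$, and get the continuous branch by a linear change of variables of the Gamma--Gamma density with scale $c_{\text{pt}}\mu_p(r_d)$, then mix the two branches. The only minor discrepancy is your definition of the constant: the paper sets $c_{\text{pt}} = \mu_t\,\eta_{\text{atm}}\,\mu_d$, i.e., it also absorbs the mean photon number per pulse $\mu_t$ into the deterministic prefactor, whereas you wrote $c_{\text{pt}} = \mu_d\,\eta_{\text{atm}}$.
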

\begin{proof}
	Refer to Appendix~\ref{AppB}.
\end{proof}
Proposition~2 characterizes the conditional PDF of the average photon count $\mu_q$ at the receiver, given a beam displacement $r_d$ and conditioned on the transmission of a single photon. It reveals that $\mu_q$ consists of a Dirac delta component at zero, which accounts for the possibility that the photon is lost due to FoV misalignment. This formulation is used to analyze the probability that at least one photon is successfully received at the detector within a given quantum time slot, forming a key component in the performance evaluation of UAV-based quantum communication systems.

According to the transmitter statistics in \eqref{df1}, the number of photons emitted in each quantum slot follows a Poisson distribution. Each emitted photon is subject to deterministic atmospheric absorption \eqref{at1}, random turbulence fading \eqref{at2}, misalignment loss due to pointing errors \eqref{sb1}, geometric acceptance constraints via the FoV indicator \eqref{at3}, and detector inefficiencies.
Since independent thinning of a Poisson process results in another Poisson process, the number of detected photons in each quantum time slot, conditioned on the realization of the random variables $\eta_{\text{turb}}$, $\mu_p$, and $\mu_{\text{FOV}}$, also follows a Poisson distribution:
\begin{align} \label{eq:poisson_cond}
	\mathbb{P}\left(n_q = k \mid \mu_q \right) = \frac{e^{-\mu_q} \mu_q^k}{k!}, \quad k \in \mathbb{N}_0,
\end{align}
where $n_q$ denotes the number of photons detected by the SPAD within a single quantum time slot. Also, $\mu_q$ corresponds to the conditional probability that a single transmitted photon is successfully detected, whose distribution conditioned on the beam displacement $r_d$ is formally derived in Proposition~2.

\begin{proposition}
	The probability of detecting at least one photon in a quantum time slot, denoted by $\mathbb{P}(n_q \geq 1)$, is approximated as:
	\begin{align} \label{eq:laplace_approx_final}
		\mathbb{P}(n_q \geq 1  ) 
		=  \int_0^\infty \mathbb{P}(n_q \geq 1 \mid r_d) \cdot f_{r_d}(r_d) ~dr_d,
	\end{align}
	where the conditional success probability given beam displacement $r_d$ is approximated by:
	\begin{align} \label{eq:laplace_approx_final3}
		&\mathbb{P}(n_q \geq 1 \mid r_d) 
		=  c_{\text{pt}} \left(1 - \exp\left( -\frac{\theta_{\text{FOV}}^2}{2 \sigma_{\text{AoA}}^2} \right) \right) \nonumber \\
		&~~~\times \sum_{i=1}^{N_g} c_i \exp\left( -\frac{2(x_i - r_d)^2}{w_z^2} \right),
	\end{align}
	where $c_i$ are segment weights and $x_i$ are the grid centers defined in Appendix~\ref{AppA}.
\end{proposition}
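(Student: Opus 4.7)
The plan is to start from the conditional Poisson model in \eqref{eq:poisson_cond} and gradually remove each layer of randomness (photon count, turbulence, FoV, geometric capture) by taking iterated expectations, finally integrating over the beam displacement $r_d$.

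First I would note that from \eqref{eq:poisson_cond}, the probability of at least one detection given $\mu_q$ is
\begin{align}
\mathbb{P}(n_q \geq 1 \mid \mu_q) = 1 - e^{-\mu_q}.
\end{align}
The key observation is that UAV-based quantum links operate in the deep sub-unity regime $\mu_q \ll 1$ (this is exactly the quantum regime motivating the paper: transmissivities of order $10^{-2}$ or less combined with $\mu_t$ kept small for security). In this regime the linearization $1 - e^{-\mu_q} \approx \mu_q$ is accurate to first order, so $\mathbb{P}(n_q \geq 1 \mid \mu_q) \approx \mu_q$. Taking the conditional expectation over $\mu_q$ given $r_d$ reduces the problem to computing $\mathbb{E}[\mu_q \mid r_d]$.

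Next I would evaluate this conditional mean using the PDF in Proposition~2. The Dirac component at $u=0$ contributes nothing, while the Gamma--Gamma-shaped component, a scaled turbulence density supported on $[0, c_{\text{pt}}\mu_p(r_d)]$, has mean equal to $c_{\text{pt}}\mu_p(r_d)\,\mathbb{E}[\eta_{\text{turb}}]$. Because the Gamma--Gamma density in \eqref{at2} is normalized so that $\mathbb{E}[\eta_{\text{turb}}]=1$, this yields
\begin{align}
\mathbb{E}[\mu_q \mid r_d] = \left(1 - \exp\!\left(-\frac{\theta_{\text{FOV}}^2}{2\sigma_{\text{AoA}}^2}\right)\right) c_{\text{pt}}\,\mu_p(r_d).
\end{align}
Substituting the grid-based approximation of $\mu_p(r_d)$ from Proposition~1 (equations \eqref{sb1}--\eqref{sbb2}) gives exactly \eqref{eq:laplace_approx_final3}.

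To obtain the unconditional probability \eqref{eq:laplace_approx_final} I would apply the law of total probability and integrate against the marginal density $f_{r_d}(r_d)$. Since $r_{dx}$ and $r_{dy}$ are independent zero-mean Gaussians with variance $\sigma_{r_d}^2 = \sigma_{\theta_e}^2 L_z^2$, the radial displacement $r_d = \lVert \boldsymbol{r}_d \rVert$ is Rayleigh distributed, and this density is what would be plugged into \eqref{eq:laplace_approx_final}.

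The main obstacle I expect is rigorously justifying the first-order linearization $1 - e^{-\mu_q} \approx \mu_q$, because $\mu_q$ is itself a random variable whose support extends up to $c_{\text{pt}}\mu_p(r_d)$; one must argue that the neglected quadratic term $\mathbb{E}[\mu_q^2 \mid r_d]/2$ is dominated by $\mathbb{E}[\mu_q \mid r_d]$ across the relevant operating range of $\mu_p(r_d)$ and turbulence moments. Beyond this, the proof is essentially a bookkeeping exercise that chains together Propositions~1 and~2 with a standard Poisson-thinning argument, so I would keep the emphasis on identifying where the small-$\mu_q$ assumption enters and on noting the normalization $\mathbb{E}[\eta_{\text{turb}}]=1$ implicit in \eqref{at2}.
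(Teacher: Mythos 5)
Your proposal is correct and follows essentially the same route as the paper's Appendix~C: complement of the Poisson zero-count, linearization $1-e^{-\mu_q}\approx\mu_q$ in the small-$\mu_q$ regime, reduction to the conditional mean via the unit-mean Gamma--Gamma turbulence factor, substitution of the grid-based $\mu_p(r_d)$, and Rayleigh averaging over $r_d$. The only cosmetic difference is that the paper evaluates the first moment explicitly through a change of variables and the Gradshteyn--Ryzhik identity $\int_0^\infty z^{\rho-1}K_\nu(2a\sqrt{z})\,dz$, whereas you invoke the normalization $\mathbb{E}[\eta_{\text{turb}}]=1$ directly; you are also more explicit than the paper about where the linearization could fail, which is a fair observation rather than a gap.
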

\begin{proof}
	Refer to Appendix~\ref{AppC}.
\end{proof}
Proposition~3 provides a simplified approximation for the probability of detecting at least one photon in a quantum time slot over a UAV-based quantum communication channel. Although not in closed form, the expression reduces the problem to a single one-dimensional integral over the beam displacement variable $r_d$, making it significantly more tractable than full multidimensional integration. This formulation enables fast numerical evaluation while capturing the joint effects of pointing error, beam spread, and receiver misalignment. Its accuracy is validated in the simulation section through comparison with Monte Carlo results.

\begin{proposition}
	The average raw key generation rate in a UAV-to-ground quantum communication link, denoted by $R_{\text{key}}$, is given by:
	\begin{align} \label{key6}
		R_{\text{key}} = R_q \cdot \mathbb{P}(n_{\text{eff}} = 1),
	\end{align}
	where
	\begin{align} \label{key5}
		&\mathbb{P}(n_{\text{eff}} = 1) =  \mu_b e^{-\mu_b} +
		\left( e^{-\mu_b} - \tfrac{1}{2} \mu_b e^{-\mu_b} \right) \nonumber \\
		&\quad\times c_{\text{pt}} \left(1 - \exp\left( -\frac{\theta_{\text{FOV}}^2}{2 \sigma_{\text{AoA}}^2} \right) \right)
		\nonumber \\
		&\quad \times \int_0^\infty \sum_{i=1}^{N_g} c_i \exp\left( -\frac{2(x_i - r_d)^2}{w_z^2} \right)
		f_{r_d}(r_d)~dr_d.
	\end{align}
\end{proposition}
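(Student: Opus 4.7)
The plan is to identify $R_{\text{key}}=R_q\cdot\mathbb{P}(n_{\text{eff}}=1)$ as a direct consequence of the slotted operation (one quantum slot per $T_{qs}=1/R_q$ seconds, each yielding at most one raw key bit), and then to compute the per-slot probability $\mathbb{P}(n_{\text{eff}}=1)$ by enumerating the mutually exclusive ways in which exactly one SPAD click can be registered. The final step is to substitute the closed-form result of Proposition~3 for the signal-side detection probability.

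First I would formally define $n_{\text{eff}}$ as the number of distinct SPAD clicks recorded within a single slot, and argue that a raw key bit is extracted if and only if $n_{\text{eff}}=1$ (all higher multiplicities cause double-click events that are discarded during sifting). Because the background arrivals $n_b\sim\text{Poisson}(\mu_b)$ are statistically independent of the signal detections $n_q$ conditioned on the channel state, I would split $\{n_{\text{eff}}=1\}$ into three disjoint sub-events: (i) $\{n_b=1\}\cap\{n_q=0\}$, where the lone click comes from a background photon; (ii) $\{n_b=0\}\cap\{n_q\geq 1\}$, where the click comes purely from the signal; and (iii) $\{n_b=1\}\cap\{n_q\geq 1\}$ conditioned on the background photon and the signal photon triggering the \emph{same} SPAD in the active measurement basis, which occurs with probability $1/2$ under the unpolarized-background assumption used in the receiver model of Fig.~\ref{ch2}.

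Next I would plug in $\mathbb{P}(n_b=0)=e^{-\mu_b}$ and $\mathbb{P}(n_b=1)=\mu_b e^{-\mu_b}$, neglecting terms with $n_b\geq 2$ since these almost surely produce $n_{\text{eff}}\geq 2$ at leading order, and would use the low-photon identity $\mathbb{P}(n_q=0)=1-\mathbb{P}(n_q\geq 1)$ coming from the Poisson model in \eqref{eq:poisson_cond}. Collecting the three contributions gives
\begin{align*}
\mathbb{P}(n_{\text{eff}}=1)
&= \mu_b e^{-\mu_b}\bigl(1-\mathbb{P}(n_q\geq 1)\bigr)\\
&\quad + e^{-\mu_b}\,\mathbb{P}(n_q\geq 1) + \tfrac{1}{2}\mu_b e^{-\mu_b}\,\mathbb{P}(n_q\geq 1),
\end{align*}
and after rearrangement the bracketed coefficient of $\mathbb{P}(n_q\geq 1)$ collapses to $e^{-\mu_b}-\tfrac{1}{2}\mu_b e^{-\mu_b}$, yielding the structure of \eqref{key5}. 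Substituting the integral representation of $\mathbb{P}(n_q\geq 1)$ from Proposition~3 completes the proof of \eqref{key6}.

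The main obstacle will be the clean justification of the factor $1/2$ in the third sub-event, which hinges on the assumption that background radiance is polarization-uniform so that, conditioned on one background photon entering the FoV, it lands on either of the two basis-aligned SPADs with equal probability. A secondary technical point is showing that neglecting $n_b\geq 2$ and $n_q\geq 2$ does not disturb \eqref{key5} to the accuracy required, which follows from the smallness of $\mu_q$ in the regime of interest together with the Poisson tail bounds implicit in \eqref{eq:poisson_cond}. Once these two modelling points are dispatched, the remainder is an algebraic rearrangement plus a direct invocation of Proposition~3.
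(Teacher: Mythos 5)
Your proposal is correct and follows essentially the same route as the paper's Appendix~D: the identical three-state disjoint decomposition (background only, signal only, signal plus non-interfering background with the $1/2$ polarization factor), the same Poisson substitutions $e^{-\mu_b}$ and $\mu_b e^{-\mu_b}$, and the same final substitution of the Proposition~3 integral for $\mathbb{P}(n_q \geq 1)$. The algebraic collapse to the coefficient $e^{-\mu_b}-\tfrac{1}{2}\mu_b e^{-\mu_b}$ matches the paper exactly.
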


\begin{proof}
	 Refer to Appendix~\ref{AppD}.
\end{proof}
This result provides a tractable expression for the average raw key generation rate using a single one-dimensional integral, capturing contributions from both successfully received signal photons and, in some cases, inadvertently accepted background photons. While background-induced bits can lead to key generation, they exhibit a $1/2$ probability of being erroneous due to polarization mismatch. Consequently, although narrowing the FoV is an effective strategy to suppress background photon incidence, residual noise remains a performance-limiting factor, especially in highly lossy quantum channels. The impact of such background contributions on the QBER is analyzed in the next proposition.

\begin{proposition}
	The average QBER for the UAV-to-ground quantum communication link is given by:
	\begin{align} \label{key7}
		&\text{QBER} = \frac{1}{2} \times \nonumber \\
		& \resizebox{0.99\linewidth}{!}{$ \frac{
				\mu_b e^{-\mu_b} \cdot \left[1 - 
				\int_0^\infty \mathbb{P}(n_q \geq 1 \mid r_d) \cdot f_{r_d}(r_d)~dr_d \right]
			}{
				\mu_b e^{-\mu_b} +
				\left(e^{-\mu_b} - \tfrac{1}{2} \mu_b e^{-\mu_b} \right)
				\cdot \int_0^\infty \mathbb{P}(n_q \geq 1 \mid r_d) \cdot f_{r_d}(r_d)~dr_d
			}. $}
	\end{align}
\end{proposition}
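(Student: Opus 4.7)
The plan is to build the QBER from its operational definition as the ratio of the expected number of erroneous raw key bits per slot to the expected total number of raw key bits per slot, and then recognize the two factors directly from earlier propositions. The denominator should coincide with $\mathbb{P}(n_{\text{eff}}=1)$ established in Proposition~4 (equation \eqref{key5}), so essentially all the new work sits in carefully identifying the numerator, i.e., the per-slot error probability, and then assembling the ratio.

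To derive the numerator, I would decompose each quantum time slot into mutually exclusive elementary outcomes based on the number of signal clicks (Poisson with mean $\mu_q$, with $\mu_q$ given conditionally by Proposition~2) and background clicks (Poisson with mean $\mu_b$), using their independence. Because the system-model discussion argues that polarization is preserved along the line-of-sight path, any slot in which a valid signal click occurs yields the \emph{correct} bit and therefore contributes zero to the error count. The only event that generates an erroneous raw key bit under the single-click convention used in Proposition~4 is the event: exactly one background photon click, no signal click. Its probability factorises as $\mu_b e^{-\mu_b}\,\bigl[1-\mathbb{P}(n_q\ge 1)\bigr]$, where $\mathbb{P}(n_q\ge 1)=\int_0^\infty \mathbb{P}(n_q\ge 1\mid r_d)\,f_{r_d}(r_d)\,dr_d$ from Proposition~3. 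Conditioning on that event, the random-polarization assumption for background photons gives a uniform $1/2$ chance that the measured bit disagrees with the sifted-basis reference, yielding the overall factor of $1/2$ that multiplies the ratio in \eqref{key7}.

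The final step is to plug the numerator $\tfrac{1}{2}\mu_b e^{-\mu_b}\bigl[1-\mathbb{P}(n_q\ge 1)\bigr]$ and the Proposition~4 denominator $\mu_b e^{-\mu_b}+\bigl(e^{-\mu_b}-\tfrac{1}{2}\mu_b e^{-\mu_b}\bigr)\mathbb{P}(n_q\ge 1)$ into the ratio, pull the $1/2$ out front, and rewrite both integrals in the form used in \eqref{key7}. No new calculus is required beyond what Propositions~3 and~4 already provide.

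The main obstacle I expect is the bookkeeping for \emph{coincident} signal-and-background clicks, making sure they are consistently handled with respect to Proposition~4. In particular, the $-\tfrac{1}{2}\mu_b e^{-\mu_b}$ correction already appearing in \eqref{key5} encodes the single-click selection rule used when both a signal and a background photon arrive; I must therefore verify that under that same rule the coincident-arrival events contribute zero \emph{net} error on average (the background half-chance of flipping is absorbed by the same combinatorial factor that discounts these slots from the raw key count), so that they leave the numerator unchanged. Once that consistency is checked, Proposition~5 follows directly by forming the ratio.
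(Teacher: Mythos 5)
Your proposal is correct and follows essentially the same route as the paper's Appendix~E: both identify the background-only event (State~2) as the sole source of erroneous raw bits, attach the $1/2$ random-polarization factor, and divide by the total single-click probability $\mathbb{P}(n_{\text{eff}}=1)$ from Proposition~4. Your consistency check on coincident signal-and-background clicks matches the paper's assumption that State~3 bits are error-free once the non-interference factor is applied.
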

\begin{proof}
	Refer to Appendix~\ref{AppE}. 
\end{proof}

\begin{table}[t]
	\centering
	\caption{Simulation Parameters}
	\begin{tabular}{ll}
		\hline
		\textbf{Parameter} & \textbf{Value / Range} \\
		\hline
		\multicolumn{2}{l}{\textit{Fixed Parameters}} \\
		\hline
		$L_z$ & 1000\,m (UAV-to-ground distance) \\
		$r_a$ & 15\,cm (Receiver aperture radius) \\
		$\mu_t$ & 0.5 (Mean photons per pulse) \\
		$\eta_{\text{atm}}$ & 0.4 (Atmospheric transmittance) \\
		$\mu_d$ & 0.6 (Detector efficiency) \\
		$T_{qs}$ & $10^{-8}$\,s (Quantum slot duration) \\
		$r_f$ & 5\,$\mu$m (Fiber core radius) \\
		$L_f$ & 15\,cm (Focal length) \\
		$\alpha, \beta$ & 2.1, 1.8 (Turbulence parameters) \\
		$\lambda$ & 1550\,nm (Wavelength) \\
		$\Delta\lambda$ & 1\,nm (Spectral filter bandwidth) \\
		$N_g$ & 10 (Number of grid segments) \\
		\hline
		\multicolumn{2}{l}{\textit{Swept Parameters}} \\
		\hline
		$w_z$ & 5\,cm to 100\,cm (Beam waist at receiver) \\
		$\sigma_{\theta_e}$ & 50\,$\mu$rad to 2\,mrad (FSM pointing error std-dev) \\
		$\sigma_{\text{AoA}}$ & 50\,$\mu$rad to 200\,$\mu$rad (Receiver misalignment std-dev) \\
		$\theta_{\text{FOV}}$ & 5\,$\mu$rad to 200\,$\mu$rad (Receiver angular FoV) \\
		$B_\lambda$ & $10^{-6}$ to $10^{-4}$ W/m$^2$/sr/nm (Spectral radiance) \\
		\hline
	\end{tabular}
	\label{tab:sim_params}
\end{table}

\section{Simulation Results}\label{simulation}
In this section, we investigate the impact of key system and channel parameters on the performance of a UAV-to-ground quantum communication link. In particular, we focus on key parameters such as beam waist at the receiver, pointing errors from the FSM, receiver angular misalignment, FoV, atmospheric turbulence, and background photon levels.

To validate the derived analytical models, we compare them with Monte Carlo simulation results obtained from $10^6$ quantum time slots. In each slot, signal photons are generated based on a Poisson distribution with mean $\mu_t$, while background photons follow a separate Poisson distribution with mean $\mu_b$, depending on the FoV and background spectral radiance. Polarizations of background photons are randomized to reflect real-world noise behavior. Random samples for channel impairments, such as FSM pointing errors, receiver angular deviations, and turbulence-induced fading, are drawn for each time slot. Using the developed receiver model, signal and noise photons are processed to determine quantum bit detection outcomes. Finally, the quantum key generation rate and QBER are computed across all time slots and compared with their analytical counterparts.

The simulation parameters are selected based on widely accepted values in the quantum communication literature, as summarized in Table~\ref{tab:sim_params}. These include key geometric, optical, and statistical parameters that influence photon transmission, beam propagation, receiver coupling, and detection probabilities. To evaluate system robustness and understand the influence of critical design factors, certain parameters, such as beam waist, pointing errors, receiver misalignment, FoV, and background spectral radiance, are swept over a wide range. The exact values used in each scenario are annotated within the corresponding figures throughout this section.

\begin{figure}
	\begin{center}
		\includegraphics[width=3.0 in]{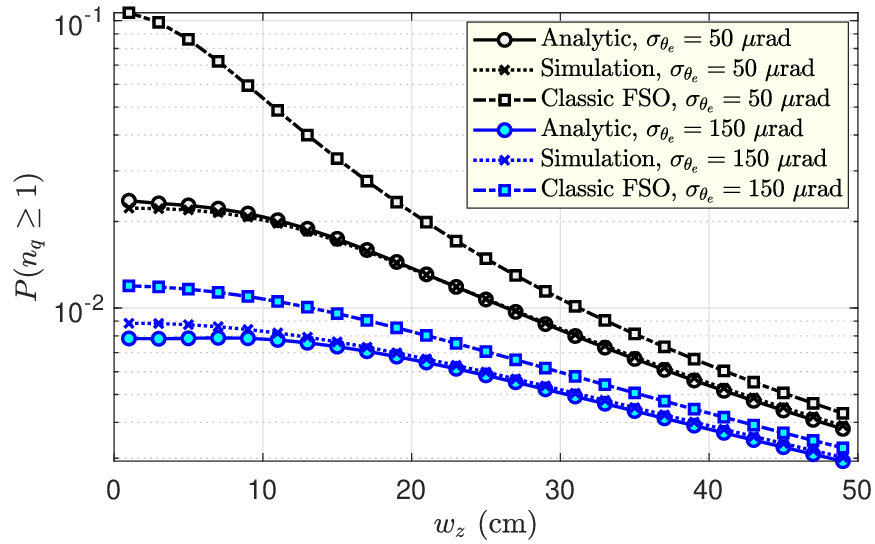}
		\caption{End-to-end single-photon detection probability versus beam waist $w_z$ for two FSM tracking error levels: $\sigma_{\theta_e} = 50~\mu$rad and $150~\mu$rad. Results compare the classical FSO wide-beam approximation in \eqref{xc7}, the proposed analytical model from Proposition~3, and Monte Carlo simulation.}
		\label{ca1}
	\end{center}
\end{figure}

Before analyzing key rate and error performance, it is important to first examine the end-to-end single-photon reception probability, from the transmitter to the SPAD, under varying beam and alignment conditions. Fig.~\ref{ca1} presents this probability as a function of beam waist $w_z$ ranging from 5\,cm to 100\,cm, for two different levels of UAV tracking error: $\sigma_{\theta_e} = 50~\mu$rad and $150~\mu$rad. These values reflect high-precision FSM-based stabilization systems, which are necessary for reliable single-photon detection. As shown, increasing the tracking error from $50~\mu$rad to $150~\mu$rad reduces the detection probability by more than an order of magnitude, dropping below $10^{-2}$.
While classical FSO systems can tolerate misalignments in the milliradian range \cite{dabiri2019tractable,dabiri2019optimal}, quantum communication relies on tightly focused beams and low photon flux, requiring micro-radian scale alignment accuracy. Although this increases system cost, recent advances in compact FSMs make such precision feasible.
The results also highlight that the optimal beam waist lies well below 10\,cm, in contrast to classical FSO systems, where optimal values are typically in the meter range \cite{dabiri2019tractable,dabiri2019optimal}. Therefore, the wide-beam approximation in \eqref{xc7}, based on the assumption $w_z \gg r_a$, becomes invalid in the quantum regime. This approximation is widely adopted in classical FSO link analyses and simulations due to its high accuracy under large beam spot sizes and its analytical simplicity. However, as observed in Fig.~\ref{ca1}, in UAV-based quantum links, where optimal beam waists are tightly focused to maximize single-photon detection, this assumption no longer holds, leading to significant deviations from actual performance. In contrast, the proposed grid-based model in Proposition~3, even with only $N_g=10$ segments, closely follows the simulation results, confirming its accuracy and efficiency, while significantly reducing the computational complexity of the analytical evaluations.

\begin{figure}
	\centering
	\subfloat[] {\includegraphics[width=3.0 in]{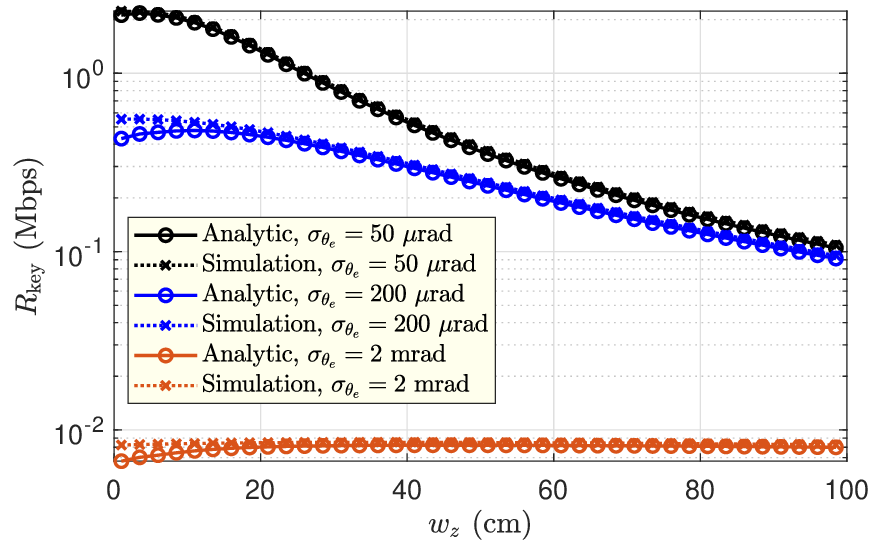}
		\label{cb1}
	}
	\hfill
	\subfloat[] {\includegraphics[width=3.0 in]{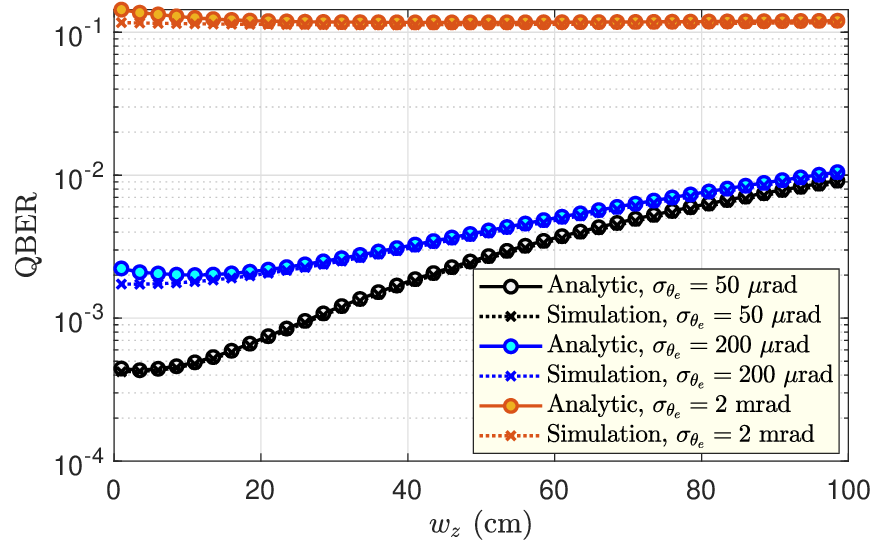}
		\label{cb2}
	}
	\caption{Impact of beam waist $w_z$ on system performance for various UAV tracking errors $\sigma_{\theta_e} = 50~\mu$rad, 200~$\mu$rad, and 2~mrad. Receiver misalignment is fixed at $\sigma_{\text{AoA}} = 50~\mu$rad, and background radiance is $B_\lambda = 10^{-6}$~W/m$^2$/sr/nm. (a) Quantum key generation rate. (b) QBER.}
	\label{cb}
\end{figure}

Fig.~\ref{cb} illustrates the system performance as a function of beam waist $w_z$, evaluated for three different UAV tracking error levels: $\sigma_{\theta_e} = 50~\mu$rad, 200~$\mu$rad, and 2~mrad. In Fig.~\ref{cb1}, the quantum key generation rate is plotted, while Fig.~\ref{cb2} shows the corresponding QBER. The receiver angular misalignment and background spectral radiance are fixed at $\sigma_{\text{AoA}} = 50~\mu$rad and $B_\lambda = 10^{-6}$~W/m$^2$/sr/nm, respectively.
As the shown in the results, with $\sigma_{\theta_e} = 2$~mrad, which represents an alignment level considered acceptable in classical FSO systems, the key rate drops below 100\,kbps, despite the transmitter operating at $T_{qs} = 10^{-8}$\,s, enabling up to 100\,Mbps under ideal conditions. More critically, the QBER exceeds $10^{-1}$, rendering the link unsuitable for secure quantum key distribution.
On the other hand, improving the tracking accuracy to $\sigma_{\theta_e} = 50~\mu$rad significantly enhances performance. For optimized beam waist values, the quantum key rate surpasses 2\,Mbps, while the QBER drops below $10^{-3}$, which is a level compatible with practical error correction and privacy amplification protocols. These results highlight the critical importance of high-precision alignment in UAV-based quantum communication.

\begin{figure}
	\centering
	\subfloat[] {\includegraphics[width=3.0 in]{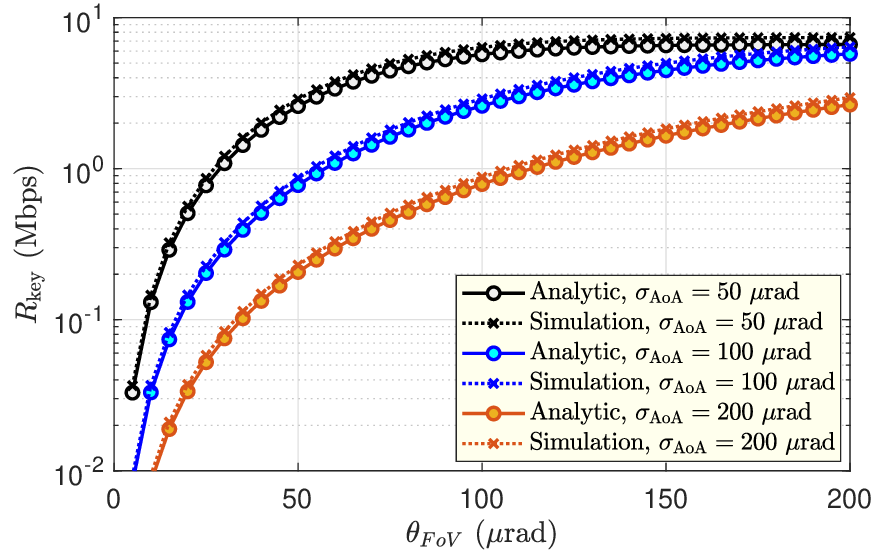}
		\label{cc1}
	}
	\hfill
	\subfloat[] {\includegraphics[width=3.0 in]{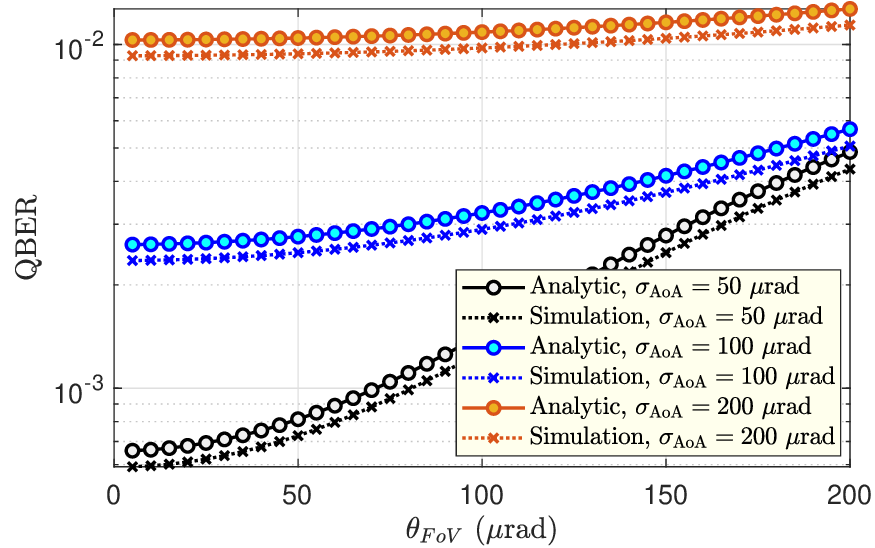}
		\label{cc2}
	}
	\caption{Impact of receiver misalignment $\sigma_{\text{AoA}}$ on system performance with $\sigma_{\theta_e} = 100~\mu$rad and background spectral radiance $B_\lambda = 10^{-6}$\,W/m$^2$/sr/nm. (a) Quantum key generation rate  and (b) QBER versus $\theta_{\text{FoV}}$ for $\sigma_{\text{AoA}} = 50$, 100, and 200\,$\mu$rad.}
	\label{cc}
\end{figure}

Fig.~\ref{cc} illustrates the impact of receiver angular misalignment, modeled by $\sigma_{\text{AoA}}$, on system performance. Figure~\ref{cc1} shows the quantum key generation rate, while Fig. \ref{cc2} presents the QBER for three values of $\sigma_{\text{AoA}} = 50$, 100, and 200\,$\mu$rad. 
Although this parameter is often neglected in classical FSO systems, the results clearly highlight its importance in quantum links. While the receiver is located on the ground and considered stable, continuous angular compensation is required due to UAV movement. This demands a precision stabilizer capable of dynamically aligning the optical axis of the FoV optics—comprising a lens and fiber—to the instantaneous tracking and alignment, which introduces additional weight and cost.
Because most time slots in quantum communication contain no signal-photon due to severe losses, minimizing background photon acceptance is critical. A narrower $\theta_{\text{FoV}}$ suppresses background photon detection, but at the expense of signal loss under misalignment. The simulation confirms that the optimal $\theta_{\text{FoV}}$ must strike a balance: wide enough to tolerate $\sigma_{\text{AoA}}$ but narrow enough to limit QBER.
As observed, increasing $\theta_{\text{FoV}}$ improves the key rate by accepting more single photons, reaching several Mbps for low $\sigma_{\text{AoA}}$. However, it also allows for more background photons, which rapidly increases QBER. Thus, optimal FoV design must be jointly tuned with receiver alignment performance.

\begin{figure}
	\centering
	\subfloat[] {\includegraphics[width=3.0 in]{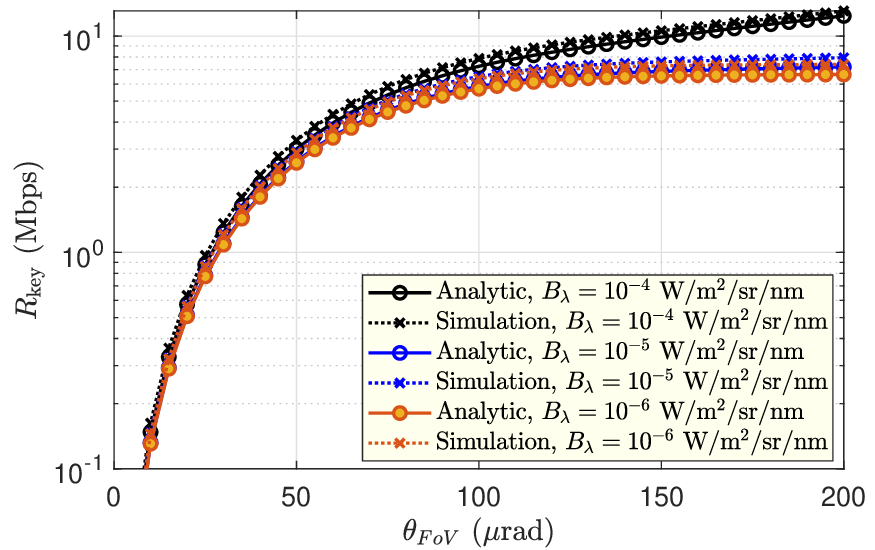}
		\label{cd1}
	}
	\hfill
	\subfloat[] {\includegraphics[width=3.0 in]{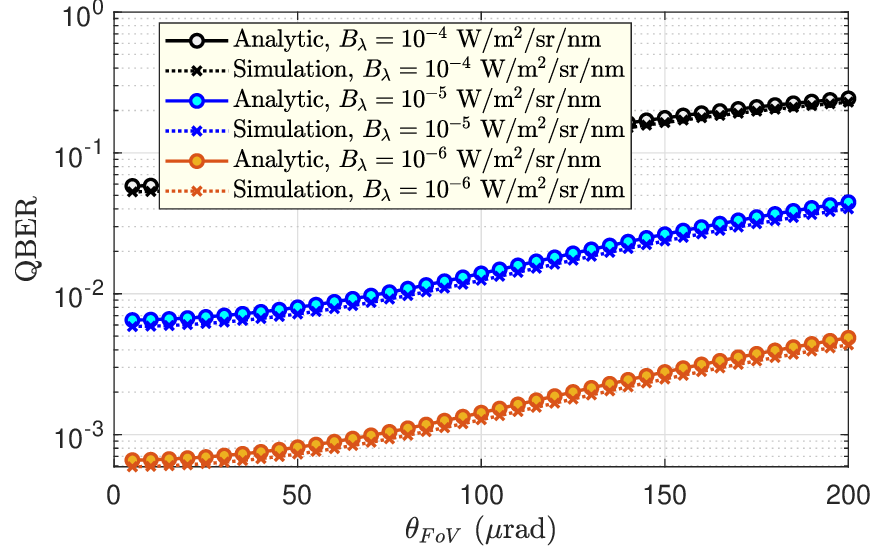}
		\label{cd2}
	}
	\caption{Effect of background radiance $B_\lambda$ on system performance for $\sigma_{\theta_e} = 100~\mu$rad and $\sigma_{\text{AoA}} = 50~\mu$rad. (a) Quantum key generation rate  and (b) QBER versus $\theta_{\text{FoV}}$ for $B_\lambda = 10^{-6}$, $10^{-5}$, and $10^{-4}$\,W/m$^2$/sr/nm.}
	\label{cd}
\end{figure}

In Fig.~\ref{cd}, we further examine how variations in background light intensity, characterized by $B_\lambda$, affect both the optimal $\theta_{\text{FoV}}$ and the overall performance of the system. Subfigures~\ref{cd1} and \ref{cd2} plot the quantum key rate and QBER, respectively, for three representative values of background radiance: $10^{-6}$, $10^{-5}$, and $10^{-4}$\,W/m$^2$/sr/nm. The UAV tracking and receiver angular errors are fixed at $\sigma_{\theta_e} = 100~\mu$rad and $\sigma_{\text{AoA}} = 50~\mu$rad, respectively.
This analysis is particularly relevant for UAV-to-ground links, where the receiver aperture points skyward. Due to the UAV's changing position, the angle of solar incidence and environmental scattering vary over time, causing significant fluctuations in background levels. While values of $B_\lambda$ in the range of $10^{-4}$ are typical for bright daylight, they can drop to $10^{-7}$ or lower at night or under cloudy conditions.
As the results show, high background levels ($B_\lambda = 10^{-4}$) also enable strong key generation rates at large $\theta_{\text{FoV}}$ due to increased signal photon capture. However, this comes at the cost of a very high QBER, rendering the key unusable. To achieve acceptable error rates, the FoV must be significantly reduced to levels that consequently reduce the key rate. Hence, adaptive tuning of $\theta_{\text{FoV}}$ based on ambient background conditions is essential to maintain high key rate and low error rate in real-world deployments.

\section{Conclusion}
This paper developed a comprehensive analytical and simulation framework for modeling UAV-to-ground quantum communication channels. Unlike classical FSO links, quantum channels impose stringent requirements on alignment precision, background suppression, and beam geometry due to their reliance on single-photon transmission. Our contributions include a tractable yet accurate photon capture model based on a grid-based approximation, the derivation of closed-form expressions for key performance metrics, such as quantum key rate and QBER, and a Monte Carlo simulation framework for validating the analytical results.
The results demonstrate that classical assumptions—such as the wide-beam approximation and milliradian-level alignment tolerances—fail in the quantum regime. Specifically, optimal beam waists were found to be in the sub-10\,cm range, where the classical FSO approximation becomes invalid. Moreover, both transmitter-side pointing errors and receiver-side angular misalignments must be controlled at the micro-radian level to ensure viable key generation rates and acceptable QBER levels. We also showed that the receiver’s FoV must be carefully optimized as a trade-off between signal photon collection and background noise rejection, with its optimal value highly sensitive to both mechanical stability and ambient light intensity.
Overall, the proposed framework not only provides critical design insights for practical UAV-based QKD systems but also highlights key differences from classical FSO links that must be addressed to achieve secure and efficient quantum communication in dynamic airborne scenarios.

While this work focused on the UAV-to-ground scenario, which represents one of the most challenging cases due to the transmitter's weight, power, and stability constraints, as well as the high background noise at the upward-facing receiver, the proposed framework is general and can be extended to other configurations, such as ground-to-UAV or terrestrial links, with minor modifications.

Future work may extend this model to incorporate mobility-induced Doppler shifts, wireless synchronization and positioning challenges, network-level multi-hop QKD deployments, and entangled quantum communication scenarios, where two correlated quantum channels must be jointly analyzed.

\appendices

\section{} \label{AppA}
To simplify the computation of $\mu_p$ in \eqref{xc6}, we leverage the rotational symmetry of the problem. Without loss of generality, we assume the photon displacement occurs along the $x$-axis, i.e., $r_{dx} = r_d$ and $r_{dy} = 0$. Under this assumption, the probability becomes:
\begin{align} \label{xc8}
	&\mu_p(r_d) = \frac{2}{\pi w_z^2} \nonumber \\
	&\times \int_{-r_a}^{r_a} \int_{-\sqrt{r_a^2 - x^2}}^{\sqrt{r_a^2 - x^2}} \exp\left( -\frac{2[(x - r_d)^2 + y^2]}{w_z^2} \right) dy\,dx.
\end{align}
The inner integral in \eqref{xc8} can be evaluated in closed form due to the Gaussian form of the integrand with respect to $y$. By isolating the $x$-dependent term and using the standard result for definite Gaussian integrals, the expression simplifies to:
\begin{align} \label{xc9}
	& \mu_p(r_d) = \frac{2}{\sqrt{2\pi} w_z} \int_{-r_a}^{r_a} \exp\left( -\frac{2(x - r_d)^2}{w_z^2} \right) \nonumber \\
	&\times  \text{erf}\left( \sqrt{ \frac{2}{w_z^2} } \sqrt{r_a^2 - x^2} \right) dx,
\end{align}
where $\text{erf}(\cdot)$ is the well-known error function.
This one-dimensional integral accurately models the photon capture probability as a function of beam displacement $r_d$ and spot size $w_z$, and remains valid even when the beam is tightly focused and the large-beam approximation in \eqref{xc7} no longer holds.

Although \eqref{xc9} provides an exact one-dimensional representation of the photon coupling probability, evaluating the integral in closed form is not feasible due to the nonlinear structure of the error function composed with a square root. Nevertheless, the smooth behavior of the integrand allows for an efficient numerical approximation. We divide the interval $[-r_a, r_a]$ into $N_g$ equally spaced segments, each of width $\Delta x = 2r_a / N_g$, and approximate the integrand as locally constant within each subinterval. Letting $x_i$ denote the center of the $i$-th grid interval, we define the contribution from that segment as:
\begin{align}
	\mu_p(r_d, i) = c_i \exp\left( -\frac{2(x_i - r_d)^2}{w_z^2} \right) ,
\end{align}
where
\begin{align}
	c_i = \frac{2\, \Delta x}{\sqrt{2\pi} w_z}  \times \text{erf}\left( \sqrt{ \frac{2}{w_z^2} } \sqrt{r_a^2 - x_i^2} \right).
\end{align}
The total photon coupling probability is then approximated by summing over all subintervals:
\begin{align}
	\mu_p(r_d) \approx \sum_{i=1}^{N_g} \mu_p(r_d, i).
\end{align}
This method provides a simple yet accurate discretization of \eqref{xc9}, and empirical evaluations show that even with a small number of divisions (e.g., $N_g = 10$), the approximation remains highly reliable.

\section{}\label{AppB}
Using \eqref{df1}, \eqref{at1}, \eqref{at2}, \eqref{sb1}, and \eqref{at3}, the end-to-end photon detection probability per slot is given by:
\begin{align} \label{eq:mu_q_final}
	\mu_q =  \underbrace{\mu_t \cdot \eta_{\text{atm}} \cdot \mu_d  \cdot \mu_p(r_d) \cdot \eta_{\text{turb}} }_{\mu_{pt} } \cdot  \mu_{\text{FOV}} .
\end{align}
Given a fixed photon capture probability $\mu_p(r_d)$, the conditional distribution of the effective photon transmission term $\mu_{pt}$ can be derived from the Gamma-Gamma distribution of $\eta_{\text{turb}}$. Letting $c_{\text{pt}} = \mu_t \cdot \eta_{\text{atm}} \cdot \mu_d$. Using \eqref{at2} and \eqref{eq:mu_q_final},
the conditional PDF of $\mu_{pt}$ given $\mu_p(r_d)$ is then obtained via a change of variables from the Gamma-Gamma distribution:
\begin{align} \label{eq:mupt_pdf}
	f_{\mu_{pt} \mid \mu_p}(u) &= \frac{2(\alpha \beta)^{\frac{\alpha + \beta}{2}}}{c_{\text{pt}}  \mu_p(r_d) \,\Gamma(\alpha)\Gamma(\beta)} \left( \frac{u}{c_{\text{pt}}  \mu_p(r_d)} \right)^{\frac{\alpha + \beta}{2} - 1}    \nonumber \\
	&~~~\times K_{\alpha - \beta} \left(2\sqrt{ \alpha \beta \cdot \frac{u}{c_{\text{pt}}  \mu_p(r_d) } } \right),
\end{align}
for $u > 0$.

Using the AoA distribution model in \eqref{aoa1} and the binary visibility indicator definition in \eqref{at3}, the random variable $\mu_{\text{FOV}}$ is a Bernoulli-type variable that takes the value 1 if the angle of arrival lies within the receiver's acceptance cone, and 0 otherwise. Since the norm $\|\boldsymbol{\theta}_{\text{AoA}}\|$ follows a Rayleigh distribution with scale parameter $\sigma_{\text{AoA}}$, the distribution of $\mu_{\text{FOV}}$ is given by:
\begin{align} \label{aoa2}
	\begin{cases}
		\mathbb{P}(\mu_{\text{FOV}} = 1) &= \mathbb{P}\left( \|\boldsymbol{\theta}_{\text{AoA}}\| \leq \theta_{\text{FOV}} \right) = 1 - \exp\left( -\frac{\theta_{\text{FOV}}^2}{2\sigma_{\text{AoA}}^2} \right), \\
		\mathbb{P}(\mu_{\text{FOV}} = 0) &= \exp\left( -\frac{\theta_{\text{FOV}}^2}{2\sigma_{\text{AoA}}^2} \right).
	\end{cases}
\end{align}
Hence, the PDF of $\mu_{\text{FOV}}$ is expressed in the impulse (Dirac delta) form as:
\begin{align} \label{eq:fov_pdf}
	f_{\mu_{\text{FOV}}}(u) &= \exp\left( -\frac{\theta_{\text{FOV}}^2}{2\sigma_{\text{AoA}}^2} \right) \delta(u) \nonumber \\
	&~~~+ \left[1 - \exp\left( -\frac{\theta_{\text{FOV}}^2}{2\sigma_{\text{AoA}}^2} \right) \right] \delta(u - 1),
\end{align}
where $\delta(\cdot)$ denotes the Dirac delta function.
Using the decomposition of the mean photon count in \eqref{eq:mu_q_final} as the product $\mu_q = \mu_{pt} \cdot \mu_{\text{FOV}}$, and combining the conditional Gamma-Gamma distribution of $\mu_{pt}$ with the binary distribution of $\mu_{\text{FOV}}$, the conditional PDF of $\mu_q$ given $\mu_p(r_d)$ is obtained in \eqref{eq:mu_q_pdf_rd}.

\section{}\label{AppC}
Based on the conditional Poisson model in \eqref{eq:poisson_cond}, the probability of detecting at least one photon in a quantum time slot, given a fixed mean photon count $\mu_q$, is expressed as:
\begin{align} \label{eq:poisson_cond_prob}
	\mathbb{P}(n_q \geq 1 \mid \mu_q) = 1 - e^{-\mu_q}.
\end{align}
Using this expression and the conditional distribution of $\mu_q$ derived in \eqref{eq:mu_q_pdf_rd}, the conditional probability of photon detection given a fixed photon capture probability $\mu_p$ can be obtained by averaging over $\mu_q$:
\begin{align} \label{eq:poisson_cond_success_final}
	\mathbb{P}(n_q \geq 1 \mid \mu_p) &= \int_0^\infty \left(1 - e^{-u} \right) f_{\mu_q \mid \mu_p}(u) \, du.
\end{align}
While the above expression is conditioned on a fixed photon capture probability $\mu_p$, in practice $\mu_p$ is a random variable that depends on the beam displacement $r_d$. The displacement $r_d$ follows a Rayleigh distribution due to the Gaussian FSM angular jitter described in \eqref{xc2}, with corresponding PDF:
\begin{align} \label{eq:rd_pdf}
	f_{r_d}(r) = \frac{r}{\sigma_{r_d}^2} \exp\left(-\frac{r^2}{2\sigma_{r_d}^2} \right), \quad r \geq 0,
\end{align}
where $\sigma_{r_d}^2 = \sigma_{\theta_e}^2 L_z^2$ is the variance of lateral beam displacement.

To account for the randomness in photon capture due to lateral beam displacement, we substitute the conditional mean photon count $\mu_q$ in \eqref{eq:poisson_cond_prob} with its distribution conditioned on $r_d$, as given by \eqref{eq:mu_q_pdf_rd}. This yields the conditional probability of detecting at least one photon given the beam displacement $r_d$:
\begin{align} \label{eq:poisson_success_cond_rd_full}
	&\mathbb{P}(n_q \geq 1 \mid r_d) 
	= \left(1 - \exp\left( -\frac{\theta_{\text{FOV}}^2}{2\sigma_{\text{AoA}}^2} \right) \right) \nonumber \\
	&\quad \times \int_0^\infty \left(1 - e^{-u} \right)
	\cdot \frac{2(\alpha \beta)^{\frac{\alpha + \beta}{2}}}{c_{\text{pt}} \mu_p(r_d)\,\Gamma(\alpha)\Gamma(\beta)} 
	\left( \frac{u}{c_{\text{pt}} \mu_p(r_d)} \right)^{\frac{\alpha + \beta}{2} - 1} \nonumber \\
	&\quad \times K_{\alpha - \beta} \left(2\sqrt{ \alpha \beta \cdot \frac{u}{c_{\text{pt}} \mu_p(r_d)} } \right) du,
\end{align}
where the dependence of $\mu_p$ on $r_d$ is given by the approximation in \eqref{sb1}, and $f_{r_d}(r)$ is defined in \eqref{eq:rd_pdf}.

By applying the change of variable $z = u / (c_{\text{pt}} \mu_p)$ to the exponential integral in the Laplace domain and approximating the exponential term as $e^{-z c_{\text{pt}} \mu_p} \approx 1 - z c_{\text{pt}} \mu_p$ for small arguments, the resulting integral becomes amenable to analytical evaluation. Using the identity from Gradshteyn and Ryzhik~\cite[Eq.~6.621.3]{gradshteyn2007table},
\begin{align}
	\int_0^\infty z^{\rho - 1} K_\nu(2 a \sqrt{z})\, dz = \frac{1}{2 a^{2\rho}} \Gamma(\rho + \nu) \Gamma(\rho - \nu),
\end{align}
with the parameter substitutions $\rho = \frac{\alpha + \beta}{2} + 1$, $\nu = \alpha - \beta$, and $a = \sqrt{\alpha \beta}$, we derive a simplified approximation for the exponential moment of the Gamma-Gamma distributed transmittance.
As a result, the overall probability of detecting at least one photon—averaged over the Rayleigh-distributed beam displacement $r_d$ from \eqref{eq:rd_pdf}—is approximated by the single integral expression given in~\eqref{eq:laplace_approx_final}.

\section{} \label{AppD}
In this appendix, we derive the average quantum key generation rate for a UAV-to-ground link, accounting for both signal and background photons. A raw key bit is generated whenever exactly one photon is detected—regardless of its source. Since the receiver cannot distinguish between signal and background photons, any photon arriving within the FoV may contribute to key generation, provided no other photon is detected in the same time slot.
Notably, even background photons can produce valid key bits if no signal photon is present, or if the signal photon is misaligned and the background photon reaches the corresponding SPAD. Given their random polarization, background photons match the correct detection basis with probability $1/2$.

Let $n_q$ and $n_b$ denote the number of signal and background photons detected in a given quantum slot, respectively. The total number of detected photons in the slot is:
\begin{align}
	n_{\text{tot}} = n_q + n_b.
\end{align}

A time slot contributes to the raw key if and only if exactly one photon is effectively detected by the SPADs, such that a valid bit can be extracted. This occurs under the following disjoint events:
\begin{align} \label{key1}
	\mathbb{P}(n_{\text{eff}} = 1) &= 
	\underbrace{\mathbb{P}(n_q \geq 1, n_b = 0)}_{\text{State 1: signal only}} 
	+ \underbrace{\mathbb{P}(n_q = 0, n_b = 1)}_{\text{State 2: background only}} \nonumber \\
	&\quad + \underbrace{\tfrac{1}{2} \cdot \mathbb{P}(n_q \geq 1, n_b = 1)}_{\text{State 3: signal and background, non-interfering}}.
\end{align}
Here, the factor $\tfrac{1}{2}$ in \textit{State 3} accounts for the fact that a background photon reaches the same SPAD as the signal photon with probability $1/2$, due to its random polarization.

Using \eqref{key1}, the first case, \textit{State 1}, corresponds to the successful detection of at least one signal photon and no background photon. This yields a clean detection scenario with:
\begin{align} \label{key2}
	&\mathbb{P}_{\text{S1}} = e^{-\mu_b} \cdot \int_0^\infty \Bigg[
	c_{\text{pt}} \left(1 - \exp\left( -\frac{\theta_{\text{FOV}}^2}{2 \sigma_{\text{AoA}}^2} \right) \right) \nonumber \\
	&\times \sum_{i=1}^{N_g} c_i \exp\left( -\frac{2(x_i - r_d)^2}{w_z^2} \right)
	\Bigg] \cdot f_{r_d}(r_d)~dr_d.
\end{align}
In \textit{State 2}, the signal photon is completely absent, and a single background photon is detected. Since it is the only photon present in the time slot, it generates a raw key bit:
\begin{align} \label{key3}
	&\mathbb{P}_{\text{S2}} = \mu_b e^{-\mu_b} \cdot \Bigg[1 - \int_0^\infty \Bigg(
	c_{\text{pt}} \left(1 - \exp\left( -\frac{\theta_{\text{FOV}}^2}{2 \sigma_{\text{AoA}}^2} \right) \right) \nonumber \\
	&\times \sum_{i=1}^{N_g} c_i \exp\left( -\frac{2(x_i - r_d)^2}{w_z^2} \right)
	\Bigg) \cdot f_{r_d}(r_d)~dr_d \Bigg].
\end{align}
The third case, \textit{State 3}, considers the simultaneous detection of signal and background photons. A valid key bit is only registered if the background photon does not interfere with the SPAD corresponding to the signal.  It generates a raw key bit:
\begin{align} \label{key4}
	&\mathbb{P}_{\text{S3}} = \tfrac{1}{2} \cdot \mu_b e^{-\mu_b} \cdot \int_0^\infty \Bigg[
	c_{\text{pt}} \left(1 - \exp\left( -\frac{\theta_{\text{FOV}}^2}{2 \sigma_{\text{AoA}}^2} \right) \right) \nonumber \\
	&\times \sum_{i=1}^{N_g} c_i \exp\left( -\frac{2(x_i - r_d)^2}{w_z^2} \right)
	\Bigg] \cdot f_{r_d}(r_d)~dr_d.
\end{align}
Based on \eqref{key2}--\eqref{key4}, the total probability of accepting a quantum time slot for raw key generation is derived in \eqref{key5}.

Accordingly, the average raw key generation rate becomes:
\begin{align}
	R_{\text{key}} = R_q \cdot \mathbb{P}(n_{\text{eff}} = 1),
\end{align}
where $R_q = 1/T_{qs}$ denotes the quantum transmission rate.

\section{} \label{AppE}
In this appendix, we derive the average QBER for the UAV-to-ground link by examining the composition of accepted raw key bits. As discussed in \eqref{key1}, valid key bits are generated in three disjoint scenarios, among which only one contributes to bit errors.
Specifically, \textit{State 2} corresponds to the case where no signal photon is present and a single background photon is detected. Since background photons carry no information about the transmitted bit and their polarization is random, any key bit generated in this state has a $50\%$ chance of being erroneous.
In contrast, bits generated in \textit{State 1} (signal-only) and \textit{State 3} (signal with non-interfering background) are assumed to be error-free, assuming ideal polarization alignment and perfect detector performance \cite{zhang2012characteristics}.
Therefore, based on \eqref{key1}--\eqref{key4}, the average QBER is given by:
\begin{align}
	\text{QBER} = \frac{1}{2} \cdot \frac{\mathbb{P}_{\text{S2}}}{\mathbb{P}(n_{\text{eff}} = 1)},
\end{align}
Therefore, based on \eqref{key1}--\eqref{key4}, the average QBER is obtained \eqref{key7}.


\balance

\end{document}